\documentclass{article}


\usepackage[final]{corl_2020} 
\usepackage{wrapfig}

\usepackage{graphicx}
\usepackage{amsmath}
\usepackage{amsthm}
\usepackage{bm}

\usepackage{tabularx}
\newcolumntype{C}{>{\centering\arraybackslash}p{19mm}}
\newcolumntype{G}{>{\centering\arraybackslash}p{4mm}}
\newcolumntype{S}{>{\centering\arraybackslash\scriptsize}p{4mm}}

\usepackage[font=small,labelfont=bf]{caption}
\usepackage{color}

\usepackage{amsthm}

\usepackage{cleveref} 

\usepackage{xspace}

\usepackage{enumitem} 

\usepackage{cite} 

\usepackage{amssymb}
\usepackage{algorithm}
\captionsetup[algorithm]{labelsep=period} 
\makeatletter
\renewcommand*{\ALG@name}{Alg.}
\makeatother
\usepackage[noend]{algpseudocode}

\newtheoremstyle{break}
  {}
  {}
  {\itshape}
  {}
  {\bfseries}
  {.}
  {\newline}
  {}

\theoremstyle{break}
\newtheorem{theorem}{Theorem}




\newcommand\mydots{\hbox to 1em{.\hss.\hss.}}


\newcommand{\x}{\bm{x}} 
\newcommand{\bc}{\bm{c}} 
\newcommand{\ac}{\bm{u}} 
\newcommand{\z}{\bm{z}} 
\newcommand{\bmu}{\boldsymbol\mu} 

\newcommand{\bQ}{\bm{Q}}

\newcommand{\w}{\bm{w}} 
\newcommand{\btheta}{\bm{\theta}} 
\newcommand{\lr}{\bm{\eta}} 
 
\newcommand{\cost}{l}

\newcommand{\f}{\bm{f}} 
\newcommand{\h}{\bm{h}} 
\newcommand{\g}{\bm{g}} 



\newcommand{\Prob}{\mathbb{P}}




\newcommand{\bq}{\mathbf{q}}
\newcommand{\bv}{\mathbf{v}}
\newcommand{\bomega}{\boldsymbol\omega}
\newcommand{\bF}{\mathbf{F}}
\newcommand{\bM}{\mathbf{M}} 
\newcommand{\bJ}{\mathbf{J}}

\newcommand{\bS}{\mathbf{S}}
\newcommand{\ba}{\mathbf{a}} 
\newcommand{\bpos}{\mathbf{p}} 
\newcommand{\pos}{\bm{p}}
\newcommand{\vel}{\bm{v}}


\newcommand{\B}{\mathcal{B}}
\newcommand{\C}{\mathcal{C}}
\newcommand{\X}{\mathcal{X}}

\newcommand{\Xsafe}{\mathcal{X}_{\text{free}}}
\newcommand{\Xgoal}{\mathcal{X}_{\text{goal}}}
\newcommand{\U}{\mathcal{U}}

\newcommand{\Z}{\mathcal{Z}}
\newcommand{\Loss}{\mathcal{L}}

\newcommand{\R}{\mathbb{R}}
\newcommand{\W}{\mathbb{W}}



\newtheoremstyle{nolinebreakstyle}
  {}
  {}
  {} 
  {} 
  {\bfseries} 
  {.} 
  {.2em} 
  {} 
\theoremstyle{nolinebreakstyle}
\newtheorem{mydef}{Definition}

\newtheoremstyle{exampstyle}
  {1em plus .2em minus .1em}
  {1em plus .2em minus .1em}
  {} 
  {} 
  {\bfseries} 
  {.} 
  {.5em} 
  {} 
  

\theoremstyle{exampstyle}

\theoremstyle{exampstyle}

\theoremstyle{exampstyle}
\newtheorem{preremark3}{Theorem}

\theoremstyle{exampstyle}

\theoremstyle{exampstyle}

\theoremstyle{exampstyle}

\newcommand\Tstrut{\rule{0pt}{2.6ex}}         
\newcommand\Bstrut{\rule[-0.9ex]{0pt}{0pt}}   
\usepackage{multirow}


\title{Sampling-based Reachability Analysis: A Random Set Theory Approach with Adversarial Sampling}

\author{%
    Thomas Lew, Marco Pavone\\
    Department of Aeronautics and Astronautics\\
    Stanford University,  
    United States\\
    \texttt{\{thomas.lew,pavone\}@stanford.edu}
}

\begin{document}
\maketitle

\begin{abstract}    
Reachability analysis is at the core of many applications, from neural network verification, to safe trajectory planning of uncertain systems. However, this problem is notoriously challenging, and current approaches tend to be either too restrictive, too slow, too conservative, or approximate and therefore lack guarantees. In this paper, we propose a simple yet effective sampling-based approach to perform reachability analysis for arbitrary dynamical systems. Our key novel idea consists of using random set theory to give a rigorous interpretation of our method, and prove that it returns sets which are guaranteed to converge to the convex hull of the true reachable sets. Additionally, we leverage recent work on robust deep learning and propose a new adversarial sampling approach to robustify our algorithm and accelerate its convergence. We demonstrate that our method is faster and less conservative than prior work, present results for approximate reachability analysis of neural networks and robust trajectory optimization of high-dimensional uncertain nonlinear systems, and discuss future applications\footnote{All code is available at \href{https://github.com/StanfordASL/UP}{https://github.com/StanfordASL/UP}}.

\end{abstract}

\keywords{Reachability analysis, robust planning and control, neural networks}

\section{Introduction}\label{sec:intro}


Reachability analysis is at the core of many applications, from robust trajectory planning of uncertain systems, to neural network verification. 
Generally, 
it entails characterizing the set of reachable states for a system at any given time in the future. For instance, planning a trajectory for a quadrotor carrying a payload of uncertain mass in severe wind requires ensuring that no reachable state collides with obstacles. 
%
In formal verification of neural networks, reachability analysis can be used to quantify the change in output for various input perturbations, and hence ensure prediction accuracy despite adversarial examples. 
However, reachability analysis is notoriously challenging, as it requires describing all reachable states from \emph{any} possible initial state and any realization of uncertain parameters of the system. In contrast to approaches which can handle problems with known probability distributions over parameters, e.g., when the state of the system is estimated with Kalman filtering, and parameters are updated through Bayesian inference, sometimes only bounds on unknown parameters are available, which is the case when constructing confidence sets for the parameters of the model  \citep{abbasi2011improved}. 
To perform reachability analysis for problems with bounded uncertainty, current methods tend to be encumbered by strong assumptions which are difficult to verify in practice, do not scale well to complex systems, or are too slow to be used within data-driven controllers which use and refine bounds on model parameters in real time. In practice for robotic applications, they often require tuning parameters used to provide theoretical guarantees and yield optimal performance, or using a simplified model of the system, e.g., assuming disturbances affect the system additively. 
These are reasonable assumptions for many applications, but the general problem remains a challenge.

\begin{figure}[!htb]
    \centering
    \includegraphics[width=1\linewidth]{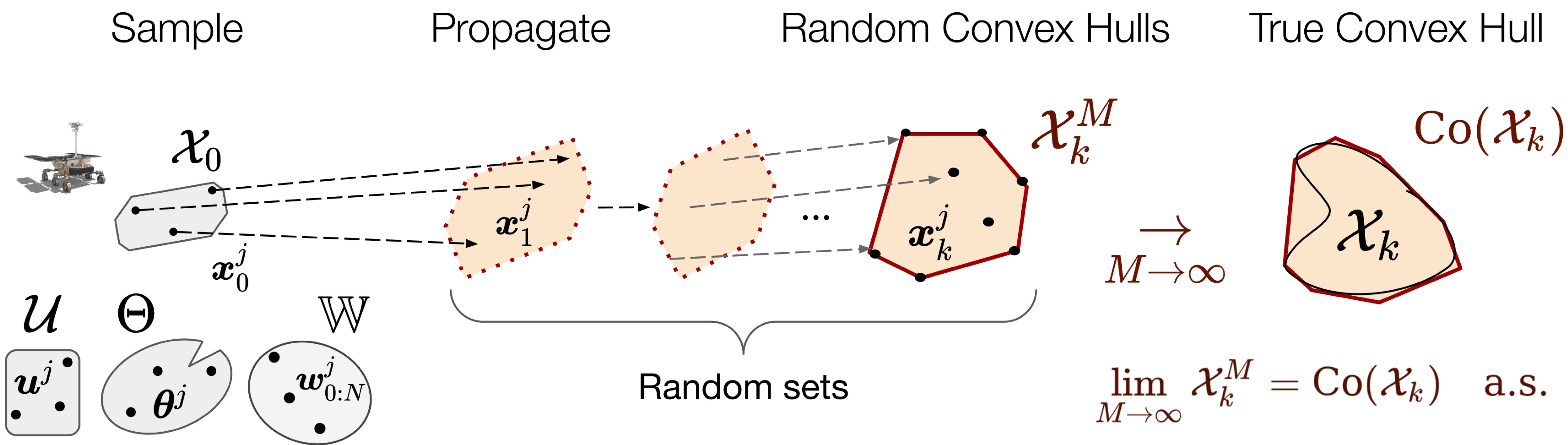}
    \caption{(\textbf{randUP}) consists of three steps: 1) initial states, controls, disturbances and parameters are sampled, 2) each particle is propagated according to the nonlinear dynamics, and 3) the convex hull at each time $k$ is computed. Since each convex hull depends on randomly sampled parameters, it is itself random, and can be mathematically described using theory of random sets \citep{Molchanov_BookTheoryOfRandomSets2017}. In Theorem \ref{thm:conv_convHulls_randSets}, we prove that each convex hull $\X_k^M(\omega)$ is guaranteed to converge to the true convex hull of the reachable set $\X_k$ as $M{\rightarrow}\infty$. In Section \ref{sec:UP_adv}, we propose (\textbf{robUP!}): an adversarial sampling scheme robustifying (\textbf{randUP}) by accelerating its convergence.}
    \label{fig:random_set_prop}
    \vspace{-5mm}
\end{figure}

\vspace{-2mm}\paragraph{Contributions} In this work, we consider the problem of reachability analysis for general systems with bounded state and parameters uncertainty, and bounded external disturbances. Our proposed approach can tackle high-dimensional nonlinear dynamics, and makes minimal assumptions about the properties of the system. It is fast, parallelizable, simple to implement, and provides accurate estimates of reachable sets. Our key novel technical idea consists of leveraging random set theory to provide theoretical convergence guarantees of our algorithm, and paves the way for future applications of this theory to the field of robotics. Specifically, our contributions are as follows: 
\begin{enumerate}[leftmargin=6mm]
    \item A novel, simple, yet effective sampling-based method to perform reachability analysis for general nonlinear systems and bounded uncertainty. Using the theory of random sets, we prove that our method is guaranteed to asymptotically converge to the convex hull of the reachable sets.

    \item We develop a novel adversarial sampling scheme inspired by the literature on robust deep learning to accelerate convergence of our algorithm, and identify its strengths and weaknesses.

    \item We demonstrate our method on a neural network, and on the robust planning problem of a 13-dimensional nonlinear uncertain spacecraft. We show that our method outperforms current approaches by yielding tighter reachable sets with faster computation time. 

    \item We discuss immediate applications of our method, as well as possible future applications of random set theory for robotics and deep learning.
\end{enumerate}


\vspace{-2mm}\paragraph{Related work} When probability distributions over uncertain parameters are available, sampling-based methods can propagate uncertainty \citep{SchmerlingPavone2017,LoquercioRAL2020} with asymptotic guarantees. Scenario optimization can also be used for reachability analysis \citep{EstReachSets_L4DC20}, and ensure probabilistic convex constraints satisfaction \citep{EsfahaniScenario2015}. Although various methods exist to approximately tackle such problems \citep{hewing2018cautious,LewBonalli2020,CastilloRAL2020,BerretAutomatica2020}, accurate probability distributions over parameters are required to ensure safe operation of the underlying autonomous systems. Instead, the problem of reachability analysis with bounded uncertainty is of interest whenever no accurate prior is available, and one must rely on confidence sets for the  parameters of the model \citep{abbasi2011improved}. For such problems, endowing a sampling-based approach for reachability analysis with guarantees is a challenging task; this often requires precise knowledge of certain mathematical properties of the system, e.g., the Lipschitz constant of the reachability function from any state with respect to the Hausdorff distance, the ratio of the surface area of the true reachable set to its volume \citep{LiebenweinRSS2018}, or smoothness properties of the boundary of the true reachable set \citep{Kor1993_BookMinimaxImgReconstr}. 

Alternatively, methods which directly leverage smoothness properties of dynamical systems are capable of computing outer-approximations of reachable sets. 
Such algorithms often leverage precise knowledge of the Lipschitz constant of the system to bound the Lagrange remainder resulting from a finite order Taylor series approximation \citep{koller2018,Dean2019RobustGF}. 
Unfortunately, precise Lipschitz constants are rarely available in data-driven control applications, and even with this knowledge these algorithms are often too conservative in practice (see Section \ref{sec:result}).
Computing tight upper-bounds on Lipschitz constants of neural networks is an active field of research \citep{fazlyab2019efficient,lipsPoly2020}, but current methods are computationally expensive, 
which prevents their use in learning-based control where the model of the system is updated online (e.g., via gradient descent \citep{MAML2017}, or linear regression over the last layer \citep{HarrisonSharmaEtAl2018,LewSharmaHarrisonPavone2020}). 
Recently, efficient and scalable sampling-based algorithms have been proposed to estimate Lipschitz constants  \citep{Dean2019RobustGF,WengExtreme2018}, but such  methods are not guaranteed to provide upper bounds \citep{fazlyab2019efficient}.

Finally, Hamilton-Jacobi (HJ) reachability analysis can compute reachable sets exactly \citep{HJIoverviewBansal2017,FisacRSS2018}. Unfortunately, handling arbitrary systems 
(e.g., neural networks) 
with such approaches is challenging, as they require  
solving a partial differential equation involving a max/minimization over controls and disturbances. For this reason, these methods typically discretize the state space and thus suffer from the curse of dimensionality \citep{HJIoverviewBansal2017}. Further, these methods leverage the principle of dynamic programming to compute solutions, and thus 
cannot handle parameter uncertainty which introduces time correlations along the trajectory (see Section \ref{sec:problem}). 
%
Other approaches include using surrogate models to directly parameterize reachable sets \citep{RubiesICRA2019,fan2020deep},  
and formal verification tools for dynamical systems \citep{ChenFlowstar2012} and neural networks \citep{IvanovVerisig2019}. 
As these methods typically perform all computation offline\footnote{%
Although recent work enables warm-starting HJ reachability analysis \citep{warmStartingHJI_Herbert2019}, such methods still require a few seconds for a 3D linear system, and more than an hour for a 10D quadrotor.}
, they may not be adequate for applications 
where bounds on model parameters are updated over time.

Unlike prior work, our method treats the general problem of performing real-time multi-steps reachability analysis for arbitrary systems. 
Specifically, we only assume that the dynamics are continuously differentiable, and that the uncertainty is bounded. 
To do so, we leverage random set theory to provide asymptotic convergence guarantees to the convex hull of the reachable sets. 
Our proposed method is suitable to various applications where approximations are sufficient in practice, and stronger guarantees can be derived given further system-specific assumptions (see Section \ref{sec:applications}). 

{
\small
\vspace{-2mm}\paragraph{Notation} We denote the sets of natural and real numbers by $\mathbb{N}$ and $\R$, respectively, and the Borel $\sigma$-algebra in $\R^n$ by $\B(\R^n)$. We use $\mathcal{F}$, $\mathcal{G}$, and $\mathcal{K}$ to denote the families of closed, open, and compact sets in $\R^n$, respectively. We write $\mathcal{S}^N{:=}\,\mathcal{S}{\times}{\mydots}{\times}\mathcal{S}$ ($N$ times) for any set $\mathcal{S}$, and $\smash{\|\x\|^2_{\bQ}}\,{=}\,\x^T\bQ\x$ with positive-definite matrix $\bQ$. The function composition operator is denoted as $\circ$, and the convex hull of a discrete set $\smash{\mathcal{X}^M}{=}\smash{\{\x^j\}_{j=1}^M}$ as $\smash{\textrm{Co}(\mathcal{X}^M)}$.
}%

\section{Problem Formulation}\label{sec:problem}
We consider general discrete-time dynamical systems of the form
\begin{equation}\label{eq:dyns}
    \x_{k+1} = \f(\x_k, \ac_k, \btheta, \w_k),
\end{equation}
with state $\x_k \in \R^n$, control input $\ac_k \in \U_k$, uncertain parameters $\btheta \in \Theta$, external disturbance $\w_k \in \W$, and initial state $\x_0 \in \X_0$, where $\U_k \subset \R^m$, $\Theta \subset \R^p$, ${\W \subset \R^q}$, and $\X_0 \subset \R^n$ are known compact sets. 
We only assume of the dynamics $\f(\cdot)$ to be continuously differentiable, which includes neural network approximators and common systems in robotic applications.


Let $N\in\mathbb{N}$, a finite time horizon. 
The goal of this paper consists of performing reachability analysis for \eqref{eq:dyns},  by computing reachable sets $\X_1,\dots,\X_N$ in which the state trajectory is guaranteed to lie.  
Formally, the reachable set $\X_k$ at each time $k=1,\dots,N$ can be expressed as
%
%
\begin{align}
\hspace{-2mm}
\X_k = \bigg\{ 
	\x_k = \f(\cdot, \ac_{k-1}, \btheta, \w_{k-1})\circ\dots\circ\f(\x_0, \ac_0, \btheta, \w_0)
\ \bigg| 
\begin{array}{l}
\x_0\in\X_0, \,
\ac_i\in\U_i, \, \w_i\in\W \\
\btheta\in\Theta,
\quad \ \ \mbox{\footnotesize $i=1,\mydots,k-1$} 
\end{array}
\bigg\}
\label{eq:reach_set_onestep}
.
\end{align}
%
%
Specifically, $\X_k$ describes the set of all possible reachable states at time $k$, using control inputs respecting  actuator constraints, and for any possible model parameter and external disturbance. 
This problem formulation also allows the verification of a given feedback controller   $\ac\,{=}\,\boldsymbol\kappa(\x)$, by computing the reachable set of the system $\f(\x_k, \boldsymbol\kappa(\x_k), \btheta, \w_k)$. 
Also, evaluating the reachable set for a given sequence of open-loop control inputs 
$(\ac_0,\mydots,\ac_{N{-}1})$ 
is a specific instance of this problem, with $\U_i\,{=}\,\{\ac_i\}$. 
By continuity of $\f$ and compactness of $\X_0$, $\U_i$, $\Theta$ and $\W$, all reachable sets 
are guaranteed to be compact. 
%
We note that \eqref{eq:reach_set_onestep} is different than defining the following recursion \citep{rosolia2019}
\begin{equation*}
\smash{
\tilde\X_{k+1}=\{ 
\x_{k+1} \ | \ \x_{k+1} = \f(\x_k, \ac_k, \btheta, \w_k), \ 
\x_k\in\tilde\X_k, \
\ac_k\in\U_k, \ 
\btheta\in\Theta, \ 
\w_k\in\W
\}
, 
\
\tilde\X_0 =\X_0,
}
\end{equation*}
which neglects the time dependency of the trajectory on $\btheta$. For instance, consider $\x_{k+1}=\btheta\x_k$, with $\btheta\in\{1,2\}$, and $\X_0=\{1\}$. Then, $\X_2=\{1,4\}$. However,  $\smash{\tilde\X_2}=\{1,2,4\}$, which is artificially more conservative than using \eqref{eq:reach_set_onestep}. This is an issue which many uncertainty propagation methods suffer from \citep{koller2018,fan2020deep,LewBonalli2020}, as it can cause additional conservatism or innacuracies \citep{simGPs_L4DC20} when considering uncertainty over model parameters.

    
Before proposing an algorithm designed to perform reachability analysis, we recall the three main difficulties of this problem: 
a) minimal assumptions are made about the dynamics \eqref{eq:dyns}; 
b) 
reachable sets should be computed as fast as possible, to enable data-driven applications where model parameters are updated online, and to embed this reachability tool within control feedback loops for robotic applications; 
c) The method should scale to relatively high dimensions $n,m,p,q$. In the following section, we present a sampling-based methodology which addresses these challenges.

\section{Approximate Reachability Analysis using Random Set Theory}\label{sec:UP_randSets}

\vspace{-1mm}

Consider computing $\X_k$ exactly. To design an algorithm, we start with the observation that if we could evaluate all possible values of $(\x_0,\ac,\btheta,\w)\in$ 
$\X_0{\times}\U^{\scalebox{.6}{$k{-}1$}}{\times}\Theta{\times}\W^{\scalebox{.6}{$k{-}1$}}$, 
and compute $\x_k$ for each tuple by forward propagation through the dynamics, then $\X_k$ would be known exactly. Unfortunately, this is only possible if $\X_0,\U,\Theta$, and $\W$ are finite and small, which is generally not the case.

\vspace{-1mm}

Instead, consider sampling a finite number~$M$ of initial states~$\x_0^j$, control trajectories~$\ac^j$, parameters~$\btheta^j$, and disturbances~$\w^j$, resulting in a finite number of states~$\smash{\x_k^j}$ which are guaranteed to lie within the true unknown reachable set~$\X_k$. From this finite set of states, three questions arise:
\vspace{-2mm}
\begin{itemize}
    \item How can we best approximate $\X_k$ from $\smash{\{\x_k^j\}_{j=1}^M}$?

    \item What theoretical guarantees result from such a sampling-based approach?
    
    \item How can we best select samples $\smash{\{\x_0^j,\ac^j,\btheta^j,\w^j\}_{j=1}^M}$, for both efficiency and accuracy?
\end{itemize}
\vspace{-1.5mm}
This section treats the first two questions, and the third will be dealt with in Section \ref{sec:UP_adv}. Before proposing a solution, it is important to correctly characterize the properties of possible set estimators which leverage the samples $\smash{\{\x_k^j\}_{j=1}^M}$. Although different approaches to approximate sets from samples can be found in the literature, they typically assume that samples are drawn uniformly within the set itself, and leverage smoothness properties of the set boundaries to provide convergence guarantees of their estimators \citep{Kor1993_BookMinimaxImgReconstr,RipleyPoissonForest1977,Rodriguez2016}. To treat our more general problem formulation, we opt for an alternative mathematical description of the class of estimators computed from a collection of randomly sampled states $\smash{\x^j}$: they are \textit{random sets}  \citep{Molchanov_BookTheoryOfRandomSets2017}. Intuitively, different approximate reachable sets will be computed for different realizations of the samples $\smash{(\x_0^j,\ac^j,\btheta^j,\w^j)}$. Mathematically, two random sets share the same distribution if their probabilities to intersect any compact set $K\,{\subset}\,\R^n$ are equal. A random set is formally defined as a map from a probability space to a family of sets \citep{Molchanov_BookTheoryOfRandomSets2017}:

\vspace{0.6mm}

\begin{mydef}[Random Set]\label{def:rand_set}
    A map $\X : \Omega \rightarrow \mathcal{F}$ from a probability space $(\Omega,\mathcal{A},\Prob)$ to the family of closed sets in $\R^n$ is a random closed set if $\{\omega \mid \mathcal{X}(\omega) \cap \mathcal{K} \neq \emptyset\} \in \mathcal{A}$ for every compact set $K \subset \R^n$.
\end{mydef}
\vspace{-0.7mm}

In this work, we seek convex approximations of the reachable sets, as convex sets are widely used for control applications and neural network verification. Specifically, we consider using the convex hull of the samples $\smash{\{\x_k^j\}_{j=1}^M}$. The interpretation of the convex hull of sampled points as a random set was mentioned in \citep{Molchanov_BookTheoryOfRandomSets2017}, but to the best of our knowledge, no prior work uses the theory of random sets for reachability analysis. We leverage random set theory to enable a clear analysis of the properties of our method, despite very weak assumptions about our system. 

\begin{minipage}{\linewidth}
	\begin{wrapfigure}{R}{0.54\linewidth}
		\hspace{2mm}
		\begin{minipage}{0.95\linewidth}
	\vspace{-25pt}
    \begin{algorithm}[H]
    	\small
		\caption{UP using Random Sampling (\textbf{randUP})} \label{alg:randUP}
	    \vspace{1mm}
	    \textbf{Parameters}: 
	    Number of samples $M$
	    \\[1mm]
		\textbf{Output}: 
	    Convex approximation of reachable sets $\X_k$
		\\[-1mm]
		\hrule\vspace{1mm}
	\begin{algorithmic}[1]
		\State Sample i.i.d. $\smash{(\x_0^j,\ac^j,\btheta^j,\w^j)}$, 
		\qquad\ \ \ 
		$j=1,\mydots,M$
		\State $\x^j_{1:N}\gets  \textrm{Propagate}(\x_0^j,\ac^j,\btheta^j,\w^j)$,
		$\ \, j=1,\mydots,M$
		\State $\X_k^M = \textrm{Co}(\{\x_k^j\}_{j=1}^M)$, 
		$\qquad\qquad\quad\ \ \ \, \, k=1,\mydots,N$
		\State\Return $\{\X_k^M\}_{k=1}^N$
	\end{algorithmic}
	\end{algorithm}
	\end{minipage}%
	\end{wrapfigure}
    We propose the simple sampling-based procedure in Algorithm~\ref{alg:randUP} illustrated in~Figure~\ref{fig:random_set_prop} to approximate the reachable sets. 
    It consists of 
     1) sampling i.i.d.~tuples $\smash{(\x_0^j,\ac^j,\btheta^j,\w^j){\in}\X_0{\times}\U^{\scalebox{.6}{$N$}}{\times}\Theta{\times}\W^{\scalebox{.6}{$N{-}1$}}}$ according to arbitrary probability distributions over $\X_0$, $\U$, $\Theta$ and $\W$, 2) propagating each sample through the dynamics~\eqref{eq:dyns} to obtain states within the true reachable sets $\X_k$, and 3) taking their convex hull. 
    Outer ellipsoidal sets or zonotopes can also be used as alternatives to facilitate the downstream application (see Section~\ref{sec:applications}). 
    Different domain-specific sampling distributions are possible; e.g., a beta distribution of shape parameters $\alpha\,{=}\,\beta\,{\ll}\,1$ for additive disturbances may maximize the volume of $\smash{\X_k^M}$. 
\end{minipage}


The key advantages of (\textbf{randUP}) over standard reachability analysis tools are that 
1) it is easy to understand and implement, 
2) the number of particles is a design choice, which can be chosen to best exploit available computation resources, or meet a precision criteria,
3) it is parallelizable, and 
4) it requires minimal assumptions about the system, namely continuity of the dynamics, and boundedness of the parameter space. 
To clarify, $\f\in\C^1$ in \eqref{eq:dyns} is only required for the algorithm introduced in the next section, which will leverage gradient information to further improve robustness.

Despite the simplicity of our algorithm, it is guaranteed to converge to the convex hulls of the true reachable sets $\X_k$. 
From a theoretical point of view, this result is important, as it guarantees that by increasing the number $M$ of particles, the true reachable sets $\X_k$ will be contained within the sampled convex hulls. 
At the core of our proof of this result lie mathematical tools from the theory of random sets. 
Specifically, we first 
restate a theorem in \citep[Prop. 1.7.23]{Molchanov_BookTheoryOfRandomSets2017}, providing necessary and sufficient conditions of convergence of random sets which are key to derive our main result:


\begin{minipage}{\linewidth}
	\begin{wrapfigure}{R}{0.28\linewidth}
		\hspace{4mm}
		\begin{minipage}{1\linewidth}
	\vspace{-2.25cm}  
	\centering
	\includegraphics[width=0.56\linewidth]{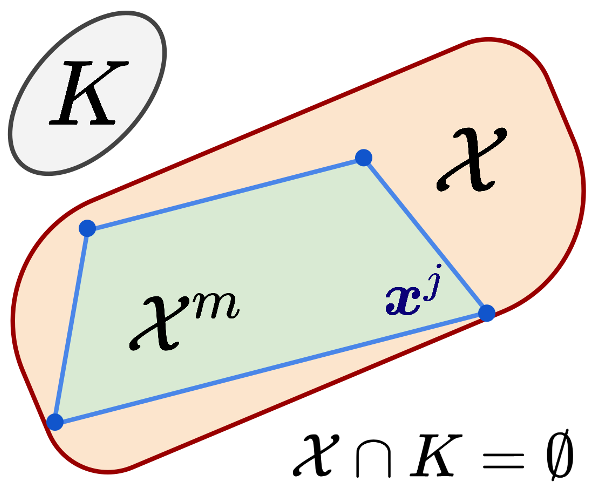}
	
	\centering
	\includegraphics[width=0.56\linewidth]{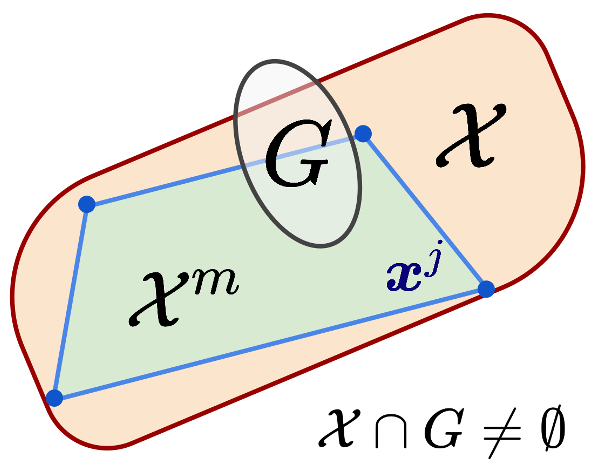}
	\vspace{-2mm}
	\caption{Conditions (\textbf{C1-2}).}
	\end{minipage}%
	\end{wrapfigure}
	\begin{minipage}{1\linewidth}
    \begin{theorem}[Convergence of Random Sets to a Deterministic Limit\vspace{2mm}]\label{thm:conv_randSets_detLim}

A sequence $\{\X^m\}_{m=1}^\infty$ of random closed sets in $\R^n$ 
almost surely converges to a deterministic closed set $\X$ if and only if the following conditions hold:

(\textbf{C1}) For any $K\in\mathcal{K}$, with $\mathcal{K}$ the family of compact sets in $\R^n$,
\begin{equation}
\X\cap K = \emptyset \implies 
\Prob(\X^m\cap K \neq \emptyset \ \ \textrm{infinitely often}) =
0.
\end{equation}

(\textbf{C2}) For any $G\in\mathcal{G}$, with $\mathcal{G}$ the family of open sets in $\R^n$,
\begin{equation}
\X\cap G \neq \emptyset \implies 
\Prob(\X^m\cap G = \emptyset \ \ \textrm{infinitely often}) = 
0.
\end{equation}

\end{theorem}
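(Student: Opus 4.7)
The plan is to leverage the classical characterization of closed-set convergence in the Fell (hit-and-miss) topology on $\mathcal{F}(\R^n)$. Recall that a deterministic sequence $F_m\subset\R^n$ converges to a closed set $F$ in this topology if and only if (i) for every open $G$ with $F\cap G\neq\emptyset$ one has $F_m\cap G\neq\emptyset$ for all sufficiently large $m$, and (ii) for every compact $K$ with $F\cap K=\emptyset$ one has $F_m\cap K=\emptyset$ for all sufficiently large $m$. Conditions (\textbf{C1}) and (\textbf{C2}) are precisely the ``complementary'' reformulations of (ii) and (i) in terms of the \emph{infinitely often} events, so the claim is that almost-sure Fell convergence is equivalent to (\textbf{C1})+(\textbf{C2}).

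The $(\Rightarrow)$ direction is straightforward: assume $\X^m(\omega)\to\X$ on a probability-one set $\Omega_0$. For any fixed compact $K$ disjoint from $\X$, pointwise Fell convergence on $\Omega_0$ forces $\X^m(\omega)\cap K=\emptyset$ for all large $m$, so $\{\X^m\cap K\neq\emptyset \text{ i.o.}\}\subset\Omega\setminus\Omega_0$ has probability zero, giving (\textbf{C1}). The analogous argument with an open $G$ intersecting $\X$ gives (\textbf{C2}). Measurability of both events follows from Definition~\ref{def:rand_set} applied to $K$ and to a countable compact exhaustion of $G$.

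For the $(\Leftarrow)$ direction, the key idea is a reduction to countably many test sets, since the hypothesis must combine uncountably many constraints into a single a.s. statement. Because $\R^n$ is second-countable and locally compact, the Fell topology on $\mathcal{F}(\R^n)$ admits a countable subbase generated by hit-sets $\{F:F\cap G_i\neq\emptyset\}$ for a countable base $\{G_i\}$ of open sets, and miss-sets $\{F:F\cap K_j=\emptyset\}$ for a countable collection of compact sets $\{K_j\}$, say closed balls with rational centers and radii. I would apply (\textbf{C1}) to each $K_j$ with $\X\cap K_j=\emptyset$ and (\textbf{C2}) to each $G_i$ with $\X\cap G_i\neq\emptyset$, extracting a countable family of null sets. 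On the complement of their union, every basis convergence condition is satisfied, which by the subbasis characterization yields Fell convergence $\X^m(\omega)\to\X$.

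The main obstacle will be in the $(\Leftarrow)$ direction: verifying that the countable collection of basis hit/miss conditions actually implies the full Fell convergence statement for \emph{arbitrary} open $G$ and compact $K$. This step uses inner/outer regularity in $\R^n$: a compact set disjoint from $\X$ can be covered by finitely many basis compacts still disjoint from $\X$, and an open set hitting $\X$ contains some $G_i$ that also hits $\X$. Once this topological bookkeeping is in place, the probabilistic conclusion reduces to a union-of-null-sets argument; the delicate point is simply ensuring that the countable approximation is compatible with both the ``hit'' and the ``miss'' directions, which is exactly where the asymmetric roles of $\mathcal{G}$ and $\mathcal{K}$ in (\textbf{C1}) and (\textbf{C2}) become essential.
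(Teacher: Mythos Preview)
Your proof sketch is essentially correct and follows the standard argument for this result via the Fell (hit-and-miss) topology. However, there is nothing to compare it against in the paper: Theorem~\ref{thm:conv_randSets_detLim} is not proved here but is explicitly restated from \cite[Prop.~1.7.23]{Molchanov_BookTheoryOfRandomSets2017} and then used as a black box to establish Theorem~\ref{thm:conv_convHulls_randSets}. The paper's own proof effort is entirely devoted to the latter result, where conditions (\textbf{C1}) and (\textbf{C2}) are verified for the specific sequence of sampled convex hulls via Borel--Cantelli arguments.

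That said, your argument stands on its own. The $(\Rightarrow)$ direction is immediate as you note. For $(\Leftarrow)$, your reduction to a countable subbase is the right move and the topological bookkeeping you flag as the ``main obstacle'' does go through: a compact $K$ disjoint from the closed set $\X$ has positive distance from $\X$, hence is covered by finitely many rational closed balls still disjoint from $\X$; and an open $G$ meeting $\X$ contains a basic open $G_i$ that already meets $\X$. The only point worth tightening is the measurability remark: $\{\X^m\cap G\neq\emptyset\}$ is indeed measurable via a countable compact exhaustion of $G$, but you should state explicitly that this is where local compactness and $\sigma$-compactness of $\R^n$ enter, since Definition~\ref{def:rand_set} only directly gives measurability of hitting events for compact $K$.
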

	\end{minipage}%
\end{minipage}

%
%
%



With this theorem, we prove that our method in Algorithm \ref{alg:randUP} computing approximations of the reachable sets is guaranteed to converge to the convex hull of the true reachable sets.

\begin{theorem}[Convergence of Random Convex Hulls using (\textbf{randUP})\vspace{1mm}] 
\label{thm:conv_convHulls_randSets}
	Let 
	$\smash{\{(\x_0^j,\ac^j,\btheta^j,\w^j)\}_{j=1}^m}$
	 be i.i.d. sampled parameters in $\X_0{\times}\U^{\scalebox{.6}{$k{-}1$}}{\times}\Theta{\times}\W^{\scalebox{.6}{$k{-}1$}}$.  
	Define $\smash{\x_k^j}$  
	according to 
	\eqref{eq:dyns}, 
	and denote the resulting convex hulls as $\smash{\X_k^m \,{=}\, \textrm{Co}(\{\x_k^j\}_{j=1}^m)}$.
	Assume that the sampling distribution of the parameters satisfies $\smash{\Prob(\x_k^j\,{\in}\, G_k)\,{>}\,0}$ for any open set $G_k$ s.t. $\X_k\,{\cap}\,G_k\,{\neq}\, \emptyset$. 
	\vspace{-1mm}
	
	Then, as $m\rightarrow\infty$, $\X_k^m$ converges to the convex hull of the reachable set $\textrm{Co}(\X_k)$ almost surely. 
\end{theorem}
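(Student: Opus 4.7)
The plan is to apply Theorem \ref{thm:conv_randSets_detLim} with the deterministic limit $\X = \textrm{Co}(\X_k)$ and the random sets $\X^m = \X_k^m$, reducing the claim to verifying conditions (\textbf{C1}) and (\textbf{C2}). First I would note that $\X_k^m$ is indeed a random closed set in the sense of Definition \ref{def:rand_set}: each propagated state $\x_k^j$ is a measurable (in fact continuous) function of the i.i.d.\ tuple $(\x_0^j,\ac^j,\btheta^j,\w^j)$ via repeated composition of $\f$, and the convex hull of finitely many random vectors in $\R^n$ is a standard example of a random closed set. Continuity of $\f$ together with compactness of $\X_0,\U_i,\Theta,\W$ also ensures that $\X_k$, and therefore $\textrm{Co}(\X_k)$, is compact.

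Condition (\textbf{C1}) is essentially by construction. Every sample satisfies $\x_k^j \in \X_k \subset \textrm{Co}(\X_k)$, so $\X_k^m = \textrm{Co}(\{\x_k^j\}_{j=1}^m) \subset \textrm{Co}(\X_k)$ for every $m$ and every realization. Consequently, if a compact $K$ misses $\textrm{Co}(\X_k)$, then $\X_k^m \cap K = \emptyset$ deterministically, and the probability in (\textbf{C1}) vanishes trivially.

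The substantive step is (\textbf{C2}). Fix an open $G$ with $G \cap \textrm{Co}(\X_k) \neq \emptyset$ and pick a witness $y \in G \cap \textrm{Co}(\X_k)$. I would apply Carath\'eodory's theorem to express $y = \sum_{i=1}^{n+1}\lambda_i x_i$ with $x_i \in \X_k$, $\lambda_i \geq 0$, $\sum_i \lambda_i = 1$. Choose $\epsilon > 0$ small enough that the open ball $B(y,\epsilon) \subset G$, and set $B_i = B(x_i,\epsilon)$. Since $x_i \in B_i \cap \X_k$, the standing assumption gives $\Prob(\x_k^j \in B_i) > 0$ for each $i$. By i.i.d.\ sampling and the second Borel--Cantelli lemma, almost surely each $B_i$ is visited infinitely often by the samples. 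On this full-measure event, for all sufficiently large $m$ there exist indices $j_1,\dots,j_{n+1} \leq m$ with $\x_k^{j_i} \in B_i$, and the triangle inequality yields
\begin{equation*}
    \Big\| \sum_i \lambda_i \x_k^{j_i} - y\Big\| \;\leq\; \sum_i \lambda_i \| \x_k^{j_i} - x_i \| \;<\; \epsilon,
\end{equation*}
so $\sum_i \lambda_i \x_k^{j_i}$ is a point of $\X_k^m \cap G$. Hence $\Prob(\X_k^m \cap G = \emptyset \text{ i.o.}) = 0$, completing (\textbf{C2}).

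The step I expect to need the most care with is the Carath\'eodory reduction when $y$ lies on the boundary of $\textrm{Co}(\X_k)$: then some $x_i$ may also be a boundary point of $\X_k$, and one wants the assumption to deliver positive hitting probability for open balls that only graze the reachable set. The hypothesis is phrased in terms of $\X_k \cap G_k \neq \emptyset$ precisely to accommodate this, so no additional regularity on $\X_k$ is needed. A smaller bookkeeping point is that Theorem \ref{thm:conv_randSets_detLim} is applied for each fixed $k=1,\dots,N$; a union bound over the finite horizon then yields simultaneous a.s.\ convergence at all times.
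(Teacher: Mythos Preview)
Your argument is correct and shares the same scaffolding as the paper's proof: both invoke Theorem~\ref{thm:conv_randSets_detLim} with $\X=\textrm{Co}(\X_k)$, dispose of (\textbf{C1}) by the trivial containment $\X_k^m\subset\textrm{Co}(\X_k)$, and establish (\textbf{C2}) via the second Borel--Cantelli lemma. The genuine difference lies in how (\textbf{C2}) is verified. The paper argues directly that for any open $G$ meeting $\textrm{Co}(\X_k)$ one has $\Prob(\x_k^j\in G)>0$, then uses the monotonicity $\X_k^m\subseteq\X_k^{m+1}$ to conclude. You instead pick a witness $y\in G\cap\textrm{Co}(\X_k)$, decompose it via Carath\'eodory into a convex combination of points $x_i\in\X_k$, and apply the hypothesis to the open balls $B(x_i,\epsilon)$ which each meet $\X_k$. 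Your route is in fact the more careful one: the stated hypothesis guarantees $\Prob(\x_k^j\in G)>0$ only when $G$ intersects $\X_k$, not merely $\textrm{Co}(\X_k)$, so an open set that meets the convex hull in its ``interior gap'' (e.g., $\X_k=\{0,1\}\subset\R$, $G=(0.4,0.6)$) is not directly covered by the paper's step (1). Your Carath\'eodory reduction is exactly what is needed to pull the problem back to neighborhoods of actual points of $\X_k$, where the hypothesis applies cleanly. The trade-off is that the paper's version is slightly shorter and makes explicit use of the nestedness $\X_k^m\subseteq\X_k^{m+1}$, which you do not need.
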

%
%

\vspace{-1mm}
We provide a complete proof in the Appendix and outline it here. First, it consists of showing that $\{\X^m\}_{m=1}^\infty$ is a sequence of compact random sets. Then, we leverage Theorem \ref{thm:conv_randSets_detLim} and show that its conditions are fulfilled. The i.i.d.~assumption is key to apply the second Borel-Cantelli lemma, and the assumption $\smash{\Prob(\x_k^j\,{\in}\, G_k)\,{>}\,0}$ ensures that any point of the true reachable set has a positive probability of being within $\X^m$. By choosing an appropriate sampling distribution over parameters, both assumptions are simple to satisfy in practice. The key advantages of (\textbf{randUP}) are its asymptotic convergence guarantees for general continuous dynamical systems, and that it is intuitive, practical, and can be applied immediately for various applications. Nevertheless, its generality comes with limitations which we discuss below, and relate to existing results in the literature.

\vspace{-2mm}\paragraph{Outer-approximation} For non-convex sets $\X_k$, the convex hull \textrm{Co}$(\X_k)$ is an over-approximation. 
Instead, 
one could take the union of $\epsilon$-balls whose radius $\epsilon$ shrinks depending on the number of samples and the smoothness of the boundary  \citep{Kor1993_BookMinimaxImgReconstr,Rodriguez2016} or dynamics \citep{LiebenweinRSS2018}, which comes with the limitations mentioned in Section \ref{sec:intro}. 
Taking the convex hull is not an issue (and is sufficient) for many applications, e.g., 
uncertainty-aware trajectory optimizers typically use convex hulls to compute feasible obstacle-free paths, as they convexify constraints using first \citep{LewBonalli2020}, or second order information\footnote{In the multimodal case, since  $\f\in\C^1$, multimodal reachable sets  occur if and only if the parameter set is disjoint. In this case, it suffices to run (\textbf{randUP}) on each disjoint set, and return their union.}.

\vspace{-2mm}\paragraph{Inner-approximation} For some classes of reachable sets $\X_k$ (e.g., a ball), our method always returns a subset of $\X_k$. This could be a limitation, especially for safety critical applications. This problem is acknowledged in \citep{RipleyPoissonForest1977}, which proposes to inflate the resulting convex hull by a quantity related to the estimated volume of $\X_k$. Unfortunately, their work focuses on set estimation in two dimensions only, and efficiently computing the volume of an arbitrary convex polytope is challenging \citep{Bueler2000}.
Similar related problems are investigated in \citep[Chap. 5.4-6,7.5]{Kor1993_BookMinimaxImgReconstr}, which proposes maximum likelihood estimators for Dudley sets, as well as star-shaped sets. Unfortunately, their work requires smoothness properties of the boundaries of the estimated set, which are unknown in our problem setting, or would be related to the Lipschitz constant of $\f$ and provide coarse estimations as discussed earlier. Critically, they assume uniform sampling of points directly within $\X_k$. For reachability analysis, parameters 
are sampled, and analytically computing the distribution for $\x_k$ is intractable. 

\vspace{-2mm}\paragraph{Rate of convergence} Although Theorem \ref{thm:conv_convHulls_randSets} provides asymptotic guarantees of convergence to the convex hull of the reachable sets, it does not provide a convergence rate. Rates of convergence for various set estimators are derived in \citep{Kor1993_BookMinimaxImgReconstr,Rodriguez2016}, by leveraging smoothness properties, and uniformly sampling directly within $\X_k$. For our problem setting, uniformly sampling parameters and computing reachable states does not lead to a uniform distribution of samples within $\X_k$, and verifying such smoothness properties is a challenge for reachable sets of general dynamical systems. Nevertheless, we believe that random set theory could be used to analyze convergence rates of similar methods (e.g., using the \textit{variance} of the random sets, as pointed out in \citep{ChevalierKriging2014}), which we leave for future work. 


Next, we propose an adversarial sampling scheme capable of accelerating the convergence of (\textbf{randUP}), alleviating the last two problem-specific limitations above.

\section{Adversarial Sampling for Robust Uncertainty Propagation}\label{sec:UP_adv}

The accuracy of the reachable set approximation depends on the realization of the sampled parameters $\smash{\z^j{=}(\x_0^j,\ac^j,\btheta^j,\w^j)}$. For instance, a sample $\z^{M+1}$ for which $\smash{\x_k^{M+1}\,{\in}\,\X_k^M}$ results in the same convex hull $\smash{\X_k^{M+1}{=}\,\X_k^M}$ and is not informative. This motivates choosing new parameters for which $\smash{\x_k^{M+1}\notin\X_k^M}$. This parallels robust classification and adversarial attacks of neural networks \citep{GoodfellowAdv2015,Kurakin2017,dong2018boosting}. Indeed, the approximate reachable set $\smash{\X_k^M}$ can be interpreted as a classifier of the reachable states, and new samples satisfying  $\smash{\x_k^{M+1}\notin\X_k^M}$ as adversarial examples. The objective then consists of obtaining a robust classifier of the reachable set. 
Adversarial examples for deep neural network classifiers are often found by slightly perturbing inputs to cause drastic changes in the associated predictions. For such parametric models, the training loss explicitly depends on the possibly perturbed input. In contrast, our decision boundary is implicitly defined as a polytope around the sampled points $\smash{\x^j}$, instead of as a parametric function of $\z^j$. Thus, even if a loss function was available for our problem, computing its gradient at any~$\z$ to find adversarial examples is challenging. Moreover, defining a loss function is not trivial; the Hausdorff distance between the approximate and the true reachable sets would be difficult to evaluate since no ground truth of the reachable set is available. In computer vision, the Chamfer distance \citep{Fan2016} or local signed distance functions \citep{Hugues1992,Cohen1998,Chabra2020} are used for 3D set reconstruction, but such methods typically discretize the state space and thus do not scale well to high-dimensional systems. 

With these analogies and challenges in mind, we propose an algorithm which consists of incrementally sampling new parameters $\z$ for which the resulting states lie outside the convex hull of the previous samples. Specifically, we search for new parameters which maximize the metric
\begin{equation}\label{eq:multisteps_adv_cost}
\begin{aligned}
    \mathcal{L}^M(\z) = \frac{1}{N}\sum_{k=1}^N \|\x_k(\z)-\bc_k^M \|_{\bQ_k^M}^2,
    ~\text{with}~\bQ_k^M={\textstyle\big(\frac{1}{M{-}1}\sum_{j=1}^M  \big(\x_k^j{-}\bc_k^M\big)\big(\x_k^j{-}\bc_k^M\big)^T \big)^{-1}},
\end{aligned}
\end{equation}
where $\bc_k^M$ is the geometric center of $\X_k^M$ computed previously with $\smash{\{\x_k^j\}_{j=1}^M}$, $\bQ_k^M$ are positive definite matrices, and $\x_k(\z)=\f(\cdot, \ac_{k-1}, \btheta, \w_{k-1})\circ\dots\circ\f(\x_0, \ac_0, \btheta, \w_0))$. Although we take the mean over multiple time steps, it is also possible to assign different time steps for different samples, but this yields minimal differences in practice. We propose to maximize \eqref{eq:multisteps_adv_cost} using Projected Gradient Ascent (PGA), and outline the algorithm in Algorithm \ref{alg:robUP!}. As the speed of convergence of PGA depends on the condition number of the Hessian of the objective function \citep{BoydVandenberghe2004}, we choose $\bQ_k^M$ as the inverse of the covariance matrix of $\smash{\{\x_k^j\}_{j=1}^M}$, which also provides a simple method to weigh the different dimensions of the state trajectory. The projection step of the parameters onto $\Z{=}\X_0{\times}\U^{\scalebox{.6}{$k$}}{\times}\Theta{\times}\W^{\scalebox{.6}{$k{-}1$}}$ can be efficiently computed for common compact sets used for learning-based control applications (e.g., rectangular and ellipsoidal sets \citep{JiaComparisonEllipsoid2017}).

\vspace{-5mm}\begin{figure}[!htb]
\begin{minipage}{.59\linewidth}
\centering
    \begin{algorithm}[H]
    \small
	\caption{Robust UP using Adversarial Sampling (\textbf{robUP!})} \label{alg:robUP!}
	\vspace{1mm}
	\textbf{Input}: 
	    Sampled parameters $\smash{\{\z^j\}_{j=1}^M{=}(\x_0^j,\ac^j,\btheta^j,\w^j)_{j=1}^M}$
	    \\[1mm]
	\textbf{Parameters}: 
	    Stepsize
	    $\lr$, \ 
	    nb. iters. $n_{\textrm{adv}}$
	    \\[1mm]
	\textbf{Output}: 
	    Particles $\smash{\mathcal{X}_N^{n_\textrm{adv}}{=}\{\x^j\}_{j=1}^{M\cdot n_{\textrm{adv}}}}$ 
	\\[-1mm]
	\hrule\vspace{1mm}
	\begin{algorithmic}[1]
	\State $\x^j_{1:N}\gets  \textrm{Propagate}(\z^j), \quad \ j=1,\mydots,M$
	\State $\mathcal{X}_N^0=\{\x^j_{1:N}\}_{j=1}^M$
	\For{$i=1,\dots,n_{\textrm{adv}}$}
%
		\State $\z^j \gets \z^j + \lr\, \nabla_{\z} \Loss^M(\x^j_{1:N}), \hspace{4.8mm} 
		j=1,\mydots,M$
		\State $\z^j \gets \textrm{Proj}_{\Z}(\z^j), \hspace{20mm} 
		j=1,\mydots,M$
		\State
		 $\x^j_{1:N}\gets  \textrm{Propagate}(\z^j), \hspace{11.5mm}
		 j=1,\mydots,M$
	\State $\mathcal{X}_N^i\gets\mathcal{X}_N^{i-1}\cup\{\x^j_{1:N}\}_{j=1}^M$
	\EndFor
	\State\Return $\textrm{Co}\big(\mathcal{X}_N^{n_\textrm{adv}}\big)$
	\end{algorithmic}
\end{algorithm}
\end{minipage}%
\begin{minipage}{0.04\linewidth}
\centering
\hfill
\end{minipage}%
\begin{minipage}{.38\linewidth}
\centering
\vspace{3.5mm}
\includegraphics[width=0.85\linewidth]{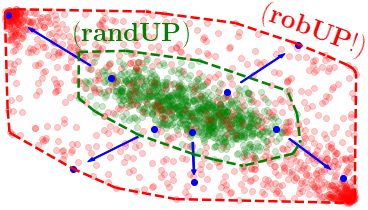}
\label{fig:adv:effect}
\caption{By varying $\x_0$, $\ac$, $\btheta$, and $\w$ along the gradient of $\Loss(\cdot)$ using projected gradient ascent, (\textbf{robUP!}) provides a simple and efficient sampling scheme which robustifies (\textbf{randUP}) 
by finding adversarial parameter samples. 
This figure shows the effect of a single adversarial step on a uncertain system subject to disturbances.}
\end{minipage}
\end{figure}

\vspace{-4mm}
\section{Leveraging System-Specific Properties and Applications}\label{sec:applications}
\vspace{-2mm}

Despite minimal assumptions about the system, we derived an asymptotic convergence guarantee for our method in Theorem~\ref{thm:conv_convHulls_randSets} using random set theory. Stronger guarantees can be derived with additional assumptions. Firstly, if the reachable sets are convex, our method is guaranteed to always provide inner approximations for any finite number of samples, and our tool can be immediately used for falsification \citep{BhatiaIncrem2004}. This is the case for any convex dynamical system and uncertainty sets, e.g., for time-varying linear systems with polytopic uncertainty, or certain neural network architectures \citep{Amos2017,Chen2019ICLR}. Secondly, if smoothness properties of $\f$ are available, inflating our approximate convex hulls by a quantity proportional to the Lipschitz constant and the distance between parameter samples suffices to guarantee an over-approximation. 
Our current approach remains an interesting additional option to efficiently approximate reachable sets whenever it is a challenge to compute tight Lipschitz constants. Thirdly, as discussed in Section \ref{sec:UP_randSets}, we believe that rates of convergence could be derived by leveraging  random set theory and smoothness properties of the boundary of the reachable set. Since current methods make specific assumptions on the distribution of the samples within $\X_k$ \citep{Kor1993_BookMinimaxImgReconstr,Rodriguez2016}, we leave such extensions for future work. Finally, our method can be used to improve the efficiency of other algorithms. Specifically, we present a direct application of our method to the problem of selecting the correct homotopy class of paths for robust planning in the next section.

\vspace{-2mm}
\section{Results and Applications}\label{sec:result}


\vspace{-2mm}\paragraph{Linear system} We first compare our approach with \citep{LiebenweinRSS2018}, which can guarantee $(1\,{-}\,\epsilon)$ volume coverage of the true reachable set. However, this method requires an oracle performing reachability analysis over $\U$ from any state, cannot handle parameters uncertainty and disturbances, requires knowledge of Lipschitz constants, the surface area, and volume of the reachable sets, and requires taking a union of reachable sets, resulting in a non-convex set which is challenging to compute. For these reasons, we provide comparisons on a system for which these assumptions hold, and consider $\x_1\,{=}\,\x_0\,{+}\,\ac_0, \ \x_1\,{\in}\,\R^n, \x_0\,{\in}\,[-1,1]^n$, $\ac_0\,{\in}\, [-\bar{u},\bar{u}]^n$. We set $\eta = 1$ and $n_{\textrm{adv}}\,{=}\,1$, sample uniformly over $\X_0 \times \U_0$, and use a grid to construct a $\delta$-covering of $\X_0$ for \citep{LiebenweinRSS2018}. Figure~\ref{fig:RSS_comparison} reports comparisons of volume coverage. Whereas (\textbf{randUP}) and (\textbf{robUP!}) provide consistent volume coverage across all problems, the convergence rate of \citep{LiebenweinRSS2018} 
is heavily dependent on $\bar{u}$. Furthermore, (\textbf{robUP!}) significantly outperforms other methods with a single adversarial sampling step only. 

\begin{figure}[!htb]
    \centering
    \includegraphics[width=0.295\linewidth]{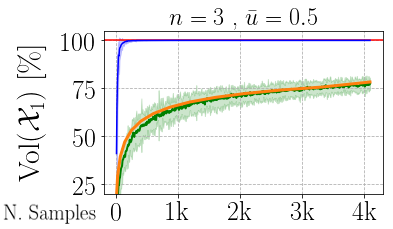}
    \includegraphics[width=0.226\linewidth]{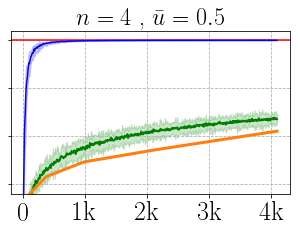}
    \includegraphics[width=0.226\linewidth]{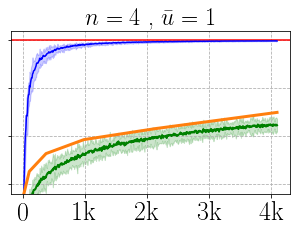}
    \includegraphics[width=0.226\linewidth]{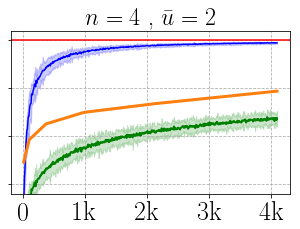}
    \caption{Comparisons on linear system with $\bar{u}\,{\in}\,\{0.5,1,2\}$, $n\,{\in}\,\{3,4\}$, and $3\sigma$ confidence bounds across 10 experiments each. Blue: (\textbf{robUP!}), 
    green: (\textbf{randUP}), 
    orange: \citep{LiebenweinRSS2018}. 
    }
    \label{fig:RSS_comparison}
\end{figure}

\vspace{-5mm}\paragraph{Neural network dynamics} Next, we demonstrate our method on a system with learned dynamics. Specifically, we consider a double integrator with $\x_k = (\pos_k,\vel_k) \in \R^4$, $\ac_k \in \R^2$, $\pos_{k+1} = \pos_k + \vel_k$, and $\vel_{k+1} = \vel_k + \ac_k$. We train a fully connected neural network with two hidden layers of sizes $(128,128)$ and $\tanh(\cdot)$ activation functions. We use a quadratic loss and $\mathcal{L}_2$ regularization, and provide further details about our implementation in PyTorch \citep{paszke2017automatic} in the Appendix. We randomize initial conditions and control trajectories within compact sets $\X_0 \times \U_k^N$ over a horizon of $N=10$ steps, where $\X_0$ are ellipsoidal sets, and $\U_k\,{=}\,\{\ac_k\}$ are fixed open-loop control trajectories. Since the true system is linear, and our trained model achieves an error below $10^{-7}$ over the state space, we can compute the volume  of the true reachable sets. For each of the 100 initial conditions and fixed control trajectories, we evaluate the volume coverage of both (\textbf{randUP}) and (\textbf{robUP!}), and report all results in Figure~\ref{fig:results:nn_lip}. To compute the projection step onto $\X_0$ in (\textbf{robUP!}), we use the Self-adaptive Alternating Direction Method of Multipliers (S-ADMM) described in \citep{JiaComparisonEllipsoid2017}.

\begin{figure}[!htb]
    \captionsetup{justification=justified}
    \begin{minipage}{.281\linewidth}
        \centering
        \includegraphics[width=1\linewidth]{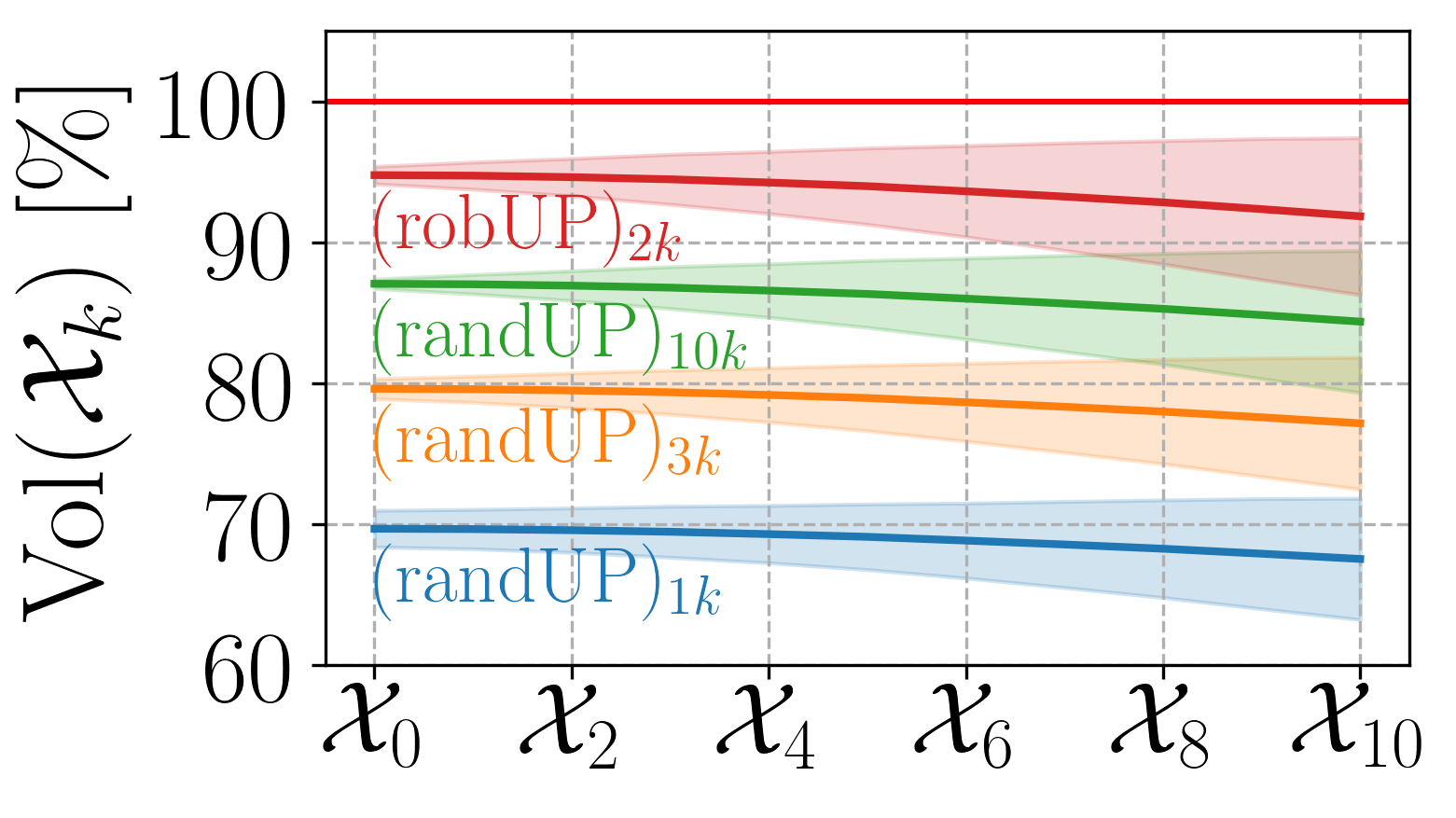}
    \end{minipage}%
    \begin{minipage}{.24\linewidth}
        \centering
        \includegraphics[width=1\linewidth]{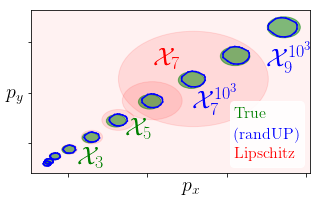}
    \end{minipage}%
    \begin{minipage}{0.452\linewidth}
        \centering\small
        \begin{tabular}{|C|G|G|G|G|G|G|}
            \hline 
            $\%$ of True Vol.   
                &$\X_0$ &$\X_1$ &$\X_2$ &$\X_4$ &$\X_5$\Bstrut 
        \end{tabular}
        \\[-0.3mm] 
        \begin{tabular}{|C|S|S|S|S|S|S|}
            \hline 
            \Tstrut(\textbf{randUP})${}^{\scalebox{.7}{$M{=}3\textrm{k}$}}$
                & 80    & 80    & 79    & 79    & 79\Bstrut 
            \\ 
            \cline{1-6}\Tstrut(\textbf{robUP!})${}^{\scalebox{.7}{$M{=}2\textrm{k}$}}_{\scalebox{.7}{$n_{\textrm{adv}}{=}1$}}$
                & 95    & 95    & 95    & 94    & 94\Bstrut 
            \\ 
            \cline{1-6}\Tstrut Lipschitz \citep{koller2018}
                & 100   & 170   & 321   & 1824  & 6542\Bstrut
            \\ 
            \hline 
        \end{tabular}
    \end{minipage}

    \begin{minipage}{.281\linewidth}
        \centering
        \includegraphics[width=1\linewidth]{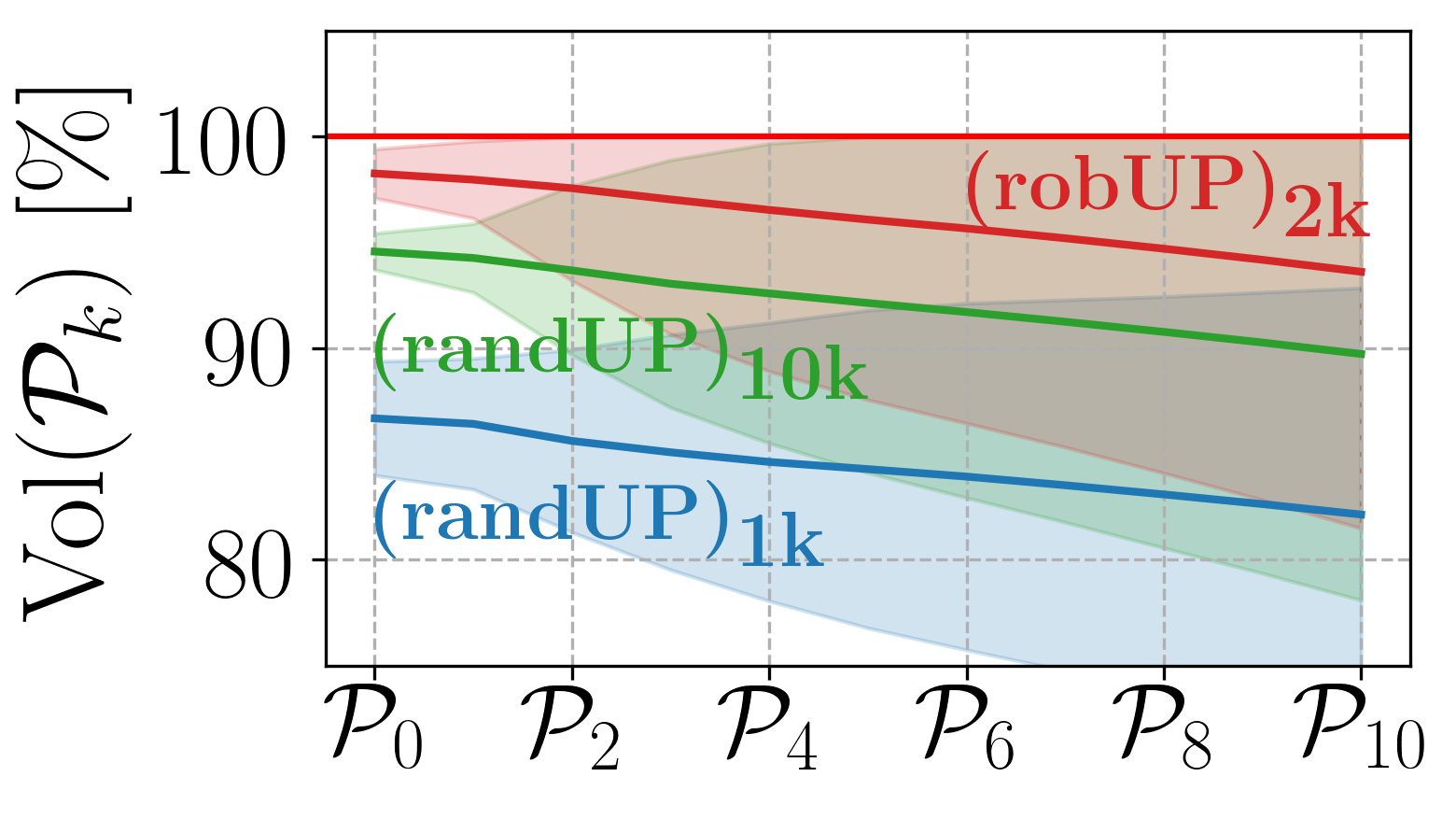}
    \end{minipage}%
    \hspace{1mm}
    \begin{minipage}{.7\linewidth}
        \vspace{2mm}
        \caption{Volume coverage on neural network system for 100 different $\X_0$ and control trajectories. Although volume coverage decreases over time (left, with $\pm 2$ standard dev.), (\textbf{robUP!}) consistently out-performs (\textbf{randUP}). When only considering positions (e.g., for obstacle avoidance), accuracy is better. Comparisons with \citep{koller2018} (top middle, right) show that it is too conservative for a longer horizon, even when the true Lipschitz constant of the system is given.}\label{fig:results:nn_lip}
    \end{minipage}%
    \vspace{-5mm}
\end{figure}


Methods leveraging the Lipschitz constant of the system are capable of conservatively approximating reachable sets. In this experiment, we provide comparisons with the method in~\citep{koller2018}\footnote{In \citep{koller2018}, uncertainty comes in the form of a Gaussian process representing the dynamics, whereas uncertainty is in the initial state in this experiment. 
Since computing tight upper bounds on Lipschitz constants of neural networks is a rapidly evolving field of research, we implement \citep{koller2018} using the true constant of the system.} (see \ref{sec:lipschitz_derivations}), 
which propagates a sequence of ellipsoidal sets that are guaranteed to contain the true system. 
Results in Figure~\ref{fig:results:nn_lip} show that even in situations where the true Lipschitz constant is available, \citep{koller2018} is still too conservative. This is due to the outer-bounding of rectangles with ellipsoids, and to the conservative approximation of the Minkowski sum of ellipsoids, neglecting time correlations of the trajectory on the uncertainty (see Section \ref{sec:problem}). 
Although this method works well for short horizons and low-dimensional systems, 
it is too conservative for longer horizons and high-dimensional nonlinear systems, whereas our approach scales well to more challenging systems, as we show next.

\vspace{-0.5mm}

\begin{minipage}{\linewidth}
	\begin{wrapfigure}{R}{0.20\linewidth}
	\begin{minipage}{1\linewidth}
	\vspace{-9mm}
	\includegraphics[width=0.95\linewidth,trim=0 28 0 0, clip]{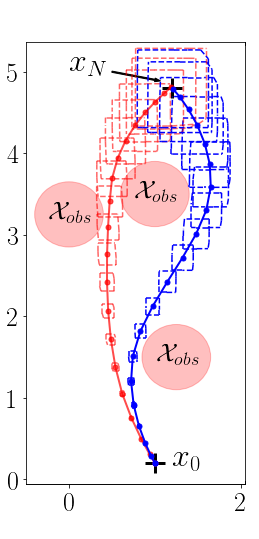}
	\caption{(\textbf{robUP!}) can be used to quickly invalidate unfeasible homotopy classes (red), and provide robust paths for a 13D nonlinear spacecraft system (blue).}\label{fig:spacecraft}
	\vspace{-3mm}
	\end{minipage}%
	\end{wrapfigure}
	
    \vspace{-2mm}\paragraph{Robust planning for a spacecraft} Since our methods have asymptotic convergence guarantees (as $M{\rightarrow}\infty$) to the (conservative) convex hull of the reachable sets, we present an application for the quick selection of feasible homotopy classes for robust planning. Providing feasibility with respect to the convex hull 
    is sufficient, 
    since many optimization-based algorithms linearize constraints to find solutions \citep{LewBonalli2020}. Our approach has connections to algorithms boosting the convergence speed of sampling-based motion planning with reachability analysis \citep{BhatiaIncrem2004,WuHybridRRT2019}, with the difference that our method can account for uncertainty over parameters and disturbances. Specifically, we consider 
    a spacecraft under uncertainty, whose state and control inputs are  
    $\x$ 
    
    ${=}[\bpos,\bv,\bq,\bomega]{\in}\mathbb{R}^{13},\ac{=}[\bF,\bM]{\in}\mathbb{R}^6$, and whose  dynamics are {\scalebox{0.98}{$\dot{\bpos} {=} \bv$}}, 
    {\scalebox{0.98}{$m\dot{\bv} {=} \bF$}}, 
    {\scalebox{0.98}{$\dot{\bq} {=} \frac{1}{2}\boldsymbol\Omega(\bomega)\bq$}}, and {\scalebox{0.98}{$\bJ\dot{\bomega} {=} \bM {-} \bS(\bomega)\bJ\bomega$}}, with {\scalebox{.95}{$J {=} \mathrm{diag}([J_x,J_y,J_z])$}} \citep{LewBonalli2020}. We use a zero-order hold on the controls, an Euler discretization scheme with $\Delta t \,{=}\, 5\mathrm{s}$, and an additive disturbance term $\w_k$ such that {\scalebox{0.99}{$\x_{k{+}1}{=} \f(\x_k,\ac_k,\btheta_k) {+} \w_k$}}. We assume the mass and inertia are unknown with known bounds {\scalebox{.9}{$m \,{\in}\, [7.1,7.3]$}}, {\scalebox{.9}{$J_i\,{\in}\,[0.065,0.075]$}}, {\scalebox{.9}{$|w_{ki}|\,{\leq}\ 10^{-4}$}} for {\scalebox{.9}{$i{=}1,{\mydots},13$}}, and {\scalebox{.9}{$|w_{ki}| \,{\leq}\, 5{\times}10^{-4}$}} for {\scalebox{.9}{$i{=}4,5,6$}}. 
    We perform randomized experiments to choose $M$ and $n_{\textrm{adv}}$, which effectively trade off computation time and accuracy (see the Appendix). 
    We decide to use (\textbf{robUP!}) with $M\,{=}\,100$ and $n_{\textrm{adv}}\,{=}\,1$, as we observe reduced returns over more adversarial steps and samples. We extend the SCP-based planning algorithm from \citep{LewBonalli2020} to leverage our uncertainty propagation schemes, and use outer rectangular confidence sets to reformulate all constraints. Planning results for a problem with three cylindrical obstacles are shown in Figure~\ref{fig:spacecraft}. Solving this problem with $N{=}21$ requires 3~SCP iterations and $875\mathrm{ms}$ on a laptop with an i7-6700 CPU (2.60~GHz) and 8~GB of RAM. If the planner is initialized with a trajectory within the homotopy class on the left, the optimizer does not converge and returns an unfeasible trajectory. This is not the case with a straight-line initialization, in which case convergence is achieved within the correct homotopy class. Although we cannot guarantee that this trajectory is safe for any possible parameters since our guarantees only hold as $M{\rightarrow}\infty$, our approach can be used to rapidly invalidate an unfeasible homotopy class of paths, and thus avoids spending additional computational resources to compute a robust solution to this problem. 
\end{minipage}





\vspace{-2mm}
\section{Conclusion}\label{sec:conclusion}
\vspace{-2.3mm}

Current approaches for reachability analysis have limitations which prevent their use for complex systems and learning-based robotic applications. To fill this gap, this paper presented a simple but theoretically justified sampling-based approach to perform real-time uncertainty propagation. It makes minimal assumptions about the system, can handle general types of uncertainty, scales well to high-dimensional systems, and 
out-performs current approaches.  
As such, it can be combined with conventional uncertainty-aware planners, and used for model-based reinforcement learning with learned dynamics for which no real-time alternatives exist, e.g., Bayesian neural networks \citep{LewSharmaHarrisonPavone2020}.

\vspace{-0.3mm}

Our new interpretation with random set theory opens news avenues for research. Firstly, it could motivate new methods to perform robust neural network training and verification, and design neural network models predicting reachable sets. Secondly, existing results for random set-valued martingales could provide further insights for reachability analysis.  Thirdly, using concentration bounds and smoothness properties of reachable sets could help provide rates of convergence, probabilistic and robust outer-approximations, as long as these assumptions can be verified in practice. Finally, our approach is amenable to parallelization and thus could be further accelerated using GPUs. 


\clearpage


{\small
\textbf{Acknowledgments}
The authors were partially supported by the Office of Naval Research, ONR YIP Program, under Contract N00014-17-1-2433. 
The authors thank the anynomous reviewers for their helpful comments, Spencer Richards for his feedback and suggestions, Robert Dyro for his implementation of (S-ADMM), Riccardo Bonalli for his feedback on theoretical results, James Harrison and Apoorva Sharma for helpful discussions on learning-based control, and Paul-Edouard Sarlin for suggesting related work on shape reconstruction.}


\bibliography{ASL_papers,main}  

\newcommand{\noopsort}[1]{} \newcommand{\printfirst}[2]{#1}
  \newcommand{\singleletter}[1]{#1} \newcommand{\switchargs}[2]{#2#1}
\begin{thebibliography}{50}
\providecommand{\natexlab}[1]{#1}
\providecommand{\url}[1]{\texttt{#1}}
\expandafter\ifx\csname urlstyle\endcsname\relax
  \providecommand{\doi}[1]{doi: #1}\else
  \providecommand{\doi}{doi: \begingroup \urlstyle{rm}\Url}\fi

\bibitem[Abbasi-Yadkori et~al.(2011)Abbasi-Yadkori, P{\'a}l, and
  Szepesv{\'a}ri]{abbasi2011improved}
Y.~Abbasi-Yadkori, D.~P{\'a}l, and C.~Szepesv{\'a}ri.
\newblock Improved algorithms for linear stochastic bandits.
\newblock In \emph{{Conf.\ on Neural Information Processing Systems}}, 2011.

\bibitem[Molchanov(2017)]{Molchanov_BookTheoryOfRandomSets2017}
I.~Molchanov.
\newblock \emph{Theory of Random Sets}.
\newblock {Springer-Verlag}, second edition, 2017.

\bibitem[Schmerling and Pavone(2017)]{SchmerlingPavone2017}
E.~Schmerling and M.~Pavone.
\newblock Evaluating trajectory collision probability through adaptive
  importance sampling for safe motion planning.
\newblock In \emph{{Robotics: Science and Systems}}, 2017.

\bibitem[Loquercio et~al.(2020)Loquercio, Segu, and
  Scaramuzza]{LoquercioRAL2020}
A.~Loquercio, M.~Segu, and D.~Scaramuzza.
\newblock A general framework for uncertainty estimation in deep learning.
\newblock \emph{{IEEE Robotics and Automation Letters}}, 5\penalty0
  (2):\penalty0 3153--3160, 2020.

\bibitem[Devonport and Arcak(2020)]{EstReachSets_L4DC20}
A.~Devonport and M.~Arcak.
\newblock Estimating reachable sets with scenario optimization.
\newblock In \emph{2nd Annual Conference on Learning for Dynamics \& Control},
  2020.

\bibitem[Esfahani et~al.(2015)Esfahani, Sutter, and
  Lygeros]{EsfahaniScenario2015}
M.~Esfahani, P.~T. Sutter, and J.~Lygeros.
\newblock Performance bounds for the scenario approach and an extension to a
  class of non-convex programs.
\newblock \emph{{IEEE Transactions on Automatic Control}}, 60\penalty0 (1),
  2015.

\bibitem[Hewing et~al.(2018)Hewing, Kabzan, and Zeilinger]{hewing2018cautious}
L.~Hewing, J.~Kabzan, and M.~N. Zeilinger.
\newblock Cautious model predictive control using {Gaussian} process
  regression.
\newblock \emph{{IEEE Transactions on Control Systems Technology}}, 2018.

\bibitem[Lew et~al.(2020)Lew, Bonalli, and Pavone]{LewBonalli2020}
T.~Lew, R.~Bonalli, and M.~Pavone.
\newblock Chance-constrained sequential convex programming for robust
  trajectory optimization.
\newblock In \emph{{European Control Conference}}, 2020.

\bibitem[Castillo-Lopez et~al.(2019)Castillo-Lopez, Ludivig, Sanchez-Lopez,
  Olivares-Mendez, and Voos]{CastilloRAL2020}
M.~Castillo-Lopez, S.~A. Ludivig, P. Sajadi-Alamdari, J.~L. Sanchez-Lopez,
  M.~A. Olivares-Mendez, and H.~Voos.
\newblock A real-time approach for chance-constrained motion planning with
  dynamic obstacles.
\newblock \emph{{IEEE Robotics and Automation Letters}}, 5\penalty0
  (2):\penalty0 3620 -- 3625, 2019.

\bibitem[Berret and Jean(2020)]{BerretAutomatica2020}
B.~Berret and F.~Jean.
\newblock Efficient computation of optimal open-loop controls for stochastic
  systems.
\newblock \emph{{IEEE Robotics and Automation Letters}}, 115\penalty0 (1),
  2020.

\bibitem[Liebenwein et~al.(2018)Liebenwein, Baykal, Gilitschenski, Karaman, and
  Rus]{LiebenweinRSS2018}
L.~Liebenwein, C.~Baykal, I.~Gilitschenski, S.~Karaman, and D.~Rus.
\newblock Sampling-based approximation algorithms for reachability analysis
  with provable guarantees.
\newblock In \emph{{Robotics: Science and Systems}}, 2018.

\bibitem[Korostelev and Tsybakov(1993)]{Kor1993_BookMinimaxImgReconstr}
A.~P. Korostelev and A.~B. Tsybakov.
\newblock \emph{Minimax Theory of Image Reconstruction}.
\newblock {Springer-Verlag}, 1 edition, 1993.

\bibitem[Koller et~al.(2018)Koller, Berkenkamp, Turchetta, and
  Krause]{koller2018}
T.~Koller, F.~Berkenkamp, M.~Turchetta, and A.~Krause.
\newblock Learning-based model predictive control for safe exploration.
\newblock In \emph{{Proc.\ IEEE Conf.\ on Decision and Control}}, 2018.

\bibitem[Dean et~al.(2020)Dean, Matni, Recht, and Ye]{Dean2019RobustGF}
S.~Dean, N.~Matni, B.~Recht, and V.~Ye.
\newblock Robust guarantees for perception-based control.
\newblock In \emph{2nd Annual Conference on Learning for Dynamics \& Control},
  2020.

\bibitem[Fazlyab et~al.(2019)Fazlyab, Robey, Hassani, Morari, and
  Pappas]{fazlyab2019efficient}
M.~Fazlyab, A.~Robey, H.~Hassani, M.~Morari, and G.~J. Pappas.
\newblock Efficient and accurate estimation of lipschitz constants for deep
  neural networks.
\newblock In \emph{{Conf.\ on Neural Information Processing Systems}}, 2019.

\bibitem[Latorre et~al.(2020)Latorre, Rolland, and Cevher]{lipsPoly2020}
F.~Latorre, P.~Rolland, and V.~Cevher.
\newblock Lipschitz constant estimation of neural networks via sparse
  polynomial optimization.
\newblock In \emph{{Int.\ Conf.\ on Machine Learning}}, 2020.

\bibitem[Finn et~al.(2017)Finn, Abbeel, and Levine]{MAML2017}
C.~Finn, P.~Abbeel, and S.~Levine.
\newblock Model-agnostic meta-learning for fast adaptation of deep networks.
\newblock In \emph{{Int.\ Conf.\ on Machine Learning}}, 2017.

\bibitem[Harrison et~al.(2018)Harrison, Sharma, and
  Pavone]{HarrisonSharmaEtAl2018}
J.~Harrison, A.~Sharma, and M.~Pavone.
\newblock Meta-learning priors for efficient online bayesian regression.
\newblock In \emph{{Workshop on Algorithmic Foundations of Robotics}}, 2018.

\bibitem[Lew et~al.(2020)Lew, Sharma, Harrison, and
  Pavone]{LewSharmaHarrisonPavone2020}
T.~Lew, A.~Sharma, J.~Harrison, and M.~Pavone.
\newblock Safe model-based meta-reinforcement learning: A sequential
  exploration-exploitation framework, 2020.
\newblock Available at \url{https://arxiv.org/abs/2008.11700}.

\bibitem[Weng et~al.(2018)Weng, Zhang, Yi, Su, Gao, Hsieh, and
  Daniel]{WengExtreme2018}
T.~Weng, P.~Zhang, H.and~Chen, J.~Yi, D.~Su, Y.~Gao, C.~Hsieh, and L.~Daniel.
\newblock Evaluating the robustness of neural networks: an extreme value theory
  approach.
\newblock In \emph{{Int.\ Conf.\ on Learning Representations}}, 2018.

\bibitem[Bansal et~al.(2017)Bansal, Chen, and Tomlin]{HJIoverviewBansal2017}
S.~Bansal, S.~L. Chen, M.~Herbert, and C.~J. Tomlin.
\newblock {Hamilton}-{Jacobi} reachability: A brief overview and recent
  advances.
\newblock In \emph{{Proc.\ IEEE Conf.\ on Decision and Control}}, 2017.

\bibitem[Fisac et~al.(2018)Fisac, Bajcsy, Herbert, Fridovich-Keil, Wang,
  Tomlin, and Dragan]{FisacRSS2018}
J.~F. Fisac, A.~Bajcsy, S.~L. Herbert, D.~Fridovich-Keil, S.~Wang, C.~J.
  Tomlin, and A.~D. Dragan.
\newblock Probabilistically safe robot planning with confidence-based human
  predictions.
\newblock In \emph{{Robotics: Science and Systems}}, 2018.

\bibitem[Rubies-Royo et~al.(2019)Rubies-Royo, Fridovich-Keil, Herbert, and
  Tomlin]{RubiesICRA2019}
V.~Rubies-Royo, D.~Fridovich-Keil, S.~L. Herbert, and C.~J. Tomlin.
\newblock A classification-based approach for approximate reachability.
\newblock In \emph{{Proc.\ IEEE Conf.\ on Robotics and Automation}}, 2019.

\bibitem[Fan et~al.(2020)Fan, Agha-mohammadi, and Theodorou]{fan2020deep}
D.~D. Fan, A.~Agha-mohammadi, and E.~A. Theodorou.
\newblock Deep learning tubes for tube {MPC}.
\newblock In \emph{{Robotics: Science and Systems}}, 2020.

\bibitem[Chen et~al.(2012)Chen, \'Abrah\'am, and
  Sankaranarayanan]{ChenFlowstar2012}
X.~Chen, E.~\'Abrah\'am, and S.~Sankaranarayanan.
\newblock Taylor model flowpipe construction for non-linear hybrid systems.
\newblock In \emph{Proc.\ of IEEE Real-Time Systems Symposium}, 2012.

\bibitem[Ivanov et~al.(2019)Ivanov, Weimer, Alur, Pappas, and
  Lee]{IvanovVerisig2019}
R.~Ivanov, J.~Weimer, R.~Alur, G.~J. Pappas, and I.~Lee.
\newblock Verisig: verifying safety properties of hybrid systems with neural
  network controllers.
\newblock In \emph{{Hybrid Systems: Computation and Control}}, 2019.

\bibitem[Herbert et~al.(2019)Herbert, Bansal, and
  Tomlin]{warmStartingHJI_Herbert2019}
S.~Herbert, S.~L.~Ghosh, S.~Bansal, and C.~J. Tomlin.
\newblock Reachability-based safety guarantees using efficient initializations.
\newblock In \emph{{Proc.\ IEEE Conf.\ on Decision and Control}}, 2019.

\bibitem[Rosolia and Borrelli(2019)]{rosolia2019}
U.~Rosolia and F.~Borrelli.
\newblock Sample-based learning model predictive control for linear uncertain
  systems, 2019.
\newblock Available at \url{https://arxiv.org/abs/1904.06432}.

\bibitem[Hewing et~al.(2020)Hewing, Arcari, Frohlich, and
  Zeilinger]{simGPs_L4DC20}
L.~Hewing, E.~Arcari, L.~P. Frohlich, and M.~N. Zeilinger.
\newblock On simulation and trajectory prediction with {Gaussian} process
  dynamics.
\newblock In \emph{2nd Annual Conference on Learning for Dynamics \& Control},
  2020.

\bibitem[Ripley and Rasson(1977)]{RipleyPoissonForest1977}
B.~D. Ripley and J.~P. Rasson.
\newblock Finding the edge of a poisson forest.
\newblock \emph{{Journal of Applied Probability}}, 14:\penalty0 483--491, 1977.

\bibitem[Rodriguez-Casal and Saavedra-Nieves(2016)]{Rodriguez2016}
A.~Rodriguez-Casal and P.~Saavedra-Nieves.
\newblock A fully data-driven method for estimating the shape of a point cloud.
\newblock \emph{ESAIM: Probability and Statistics}, 20\penalty0 (1):\penalty0
  332--348, 2016.

\bibitem[Büeler et~al.(2000)Büeler, Enge, and Fukuda]{Bueler2000}
B.~Büeler, A.~Enge, and K.~Fukuda.
\newblock Exact volume computation for polytopes: A practical study.
\newblock In \emph{Polytopes - Combinatorics and Computation}, pages 131--154.
  2000.

\bibitem[Chevalier et~al.(2014)Chevalier, Bect, Ginsbourger, Vazquez, Picheny,
  and Richet]{ChevalierKriging2014}
C.~Chevalier, J.~Bect, D.~Ginsbourger, E.~Vazquez, V.~Picheny, and Y.~Richet.
\newblock Fast parallel kriging-based stepwise uncertainty reduction with
  application to the identification of an excursion set.
\newblock \emph{Technometrics}, 56\penalty0 (4):\penalty0 455--465, 2014.

\bibitem[Goodfellow et~al.(2015)Goodfellow, Shlens, and
  Szegedy]{GoodfellowAdv2015}
I.~J. Goodfellow, J.~Shlens, and C.~Szegedy.
\newblock Explaining and harnessing adversarial examples.
\newblock In \emph{{Int.\ Conf.\ on Learning Representations}}, 2015.

\bibitem[Kurakin et~al.(2017)Kurakin, Goodfellow, and Bengio]{Kurakin2017}
A.~Kurakin, I.~J. Goodfellow, and S.~Bengio.
\newblock Adversarial examples in the physical world, 2017.
\newblock Available at \url{https://arxiv.org/abs/1607.02533}.

\bibitem[Dong et~al.(2018)Dong, Liao, Pang, Su, Zhu, Hu, and
  Li]{dong2018boosting}
Y.~Dong, F.~Liao, T.~Pang, H.~Su, J.~Zhu, X.~Hu, and J.~Li.
\newblock Boosting adversarial attacks with momentum.
\newblock In \emph{{IEEE Conf.\ on Computer Vision and Pattern Recognition}},
  2018.

\bibitem[Fan et~al.(2020)Fan, Su, and Guibas]{Fan2016}
H.~Fan, H.~Su, and L.~Guibas.
\newblock A point set generation network for {3D} object reconstruction from a
  single image, 2020.
\newblock Available at \url{https://arxiv.org/abs/1612.00603}.

\bibitem[Hoppe et~al.(1992)Hoppe, DeRose, Duchamp, McDonald, and
  Stuetzle]{Hugues1992}
H.~Hoppe, T.~DeRose, T.~Duchamp, J.~McDonald, and W.~Stuetzle.
\newblock Surface reconstruction from unorganized points.
\newblock In \emph{{ACM Proc.\ of the Annual Conf.\ on Computer Graphics and
  Interactive Techniques}}, 1992.

\bibitem[Cohen-Or et~al.(1998)Cohen-Or, Levin, and Solomovici]{Cohen1998}
D.~Cohen-Or, D.~Levin, and A.~Solomovici.
\newblock Three-dimensional distance field metamorphosis.
\newblock \emph{{ACM Transactions on Graphics}}, 17\penalty0 (2):\penalty0
  116--141, 1998.

\bibitem[Chabra et~al.(2020)Chabra, Lenssen, Ilg, Schmidt, Straub, Lovegrove,
  and Newcombe]{Chabra2020}
R.~Chabra, J.~E. Lenssen, E.~Ilg, T.~Schmidt, J.~Straub, S.~Lovegrove, and
  R.~Newcombe.
\newblock Deep local shapes: Learning local {SDF} priors for detailed {3D}
  reconstruction, 2020.
\newblock Available at \url{https://arxiv.org/abs/2003.10983}.

\bibitem[Boyd and Vandenberghe(2004)]{BoydVandenberghe2004}
S.~Boyd and L.~Vandenberghe.
\newblock \emph{Convex optimization}.
\newblock {Cambridge Univ.\ Press}, 2004.

\bibitem[Jia et~al.(2017)Jia, Cai, and Han]{JiaComparisonEllipsoid2017}
Z.~Jia, X.~Cai, and D.~Han.
\newblock Comparison of several fast algorithms for projection onto an
  ellipsoid.
\newblock \emph{{Journal of Computational and Applied Mathematics}},
  319\penalty0 (1):\penalty0 320--337, 2017.

\bibitem[Bhatia and Frazzoli(2004)]{BhatiaIncrem2004}
A.~Bhatia and E.~Frazzoli.
\newblock Incremental search methods for reachability analysis of continuous
  and hybrid systems.
\newblock In \emph{{Hybrid Systems: Computation and Control}}, pages 142--156,
  Berlin, Heidelberg, 2004. Springer Berlin Heidelberg.

\bibitem[Amos et~al.(2017)Amos, Xu, and Kolter]{Amos2017}
B.~Amos, L.~Xu, and Z.~Kolter.
\newblock Input convex neural networks.
\newblock In \emph{{Int.\ Conf.\ on Machine Learning}}, 2017.

\bibitem[Chen et~al.(2019)Chen, Shi, and Zhang]{Chen2019ICLR}
Y.~Chen, Y.~Shi, and B.~Zhang.
\newblock Optimal control via neural networks: a convex approach.
\newblock In \emph{{Int.\ Conf.\ on Learning Representations}}, 2019.

\bibitem[Paszke et~al.(2017)Paszke, Gross, Chintala, Chanan, Yang, DeVito, Lin,
  Desmaison, Antiga, and Lerer]{paszke2017automatic}
A.~Paszke, D.~Gross, S.~Chintala, G.~Chanan, E.~Yang, Z.~DeVito, Z.~Lin,
  A.~Desmaison, L.~Antiga, and A.~Lerer.
\newblock Automatic differentiation in pytorch.
\newblock In \emph{{Conf.\ on Neural Information Processing Systems}}, 2017.

\bibitem[Wu et~al.(2019)Wu, Sadraddini, and Tedrake]{WuHybridRRT2019}
A.~Wu, S.~Sadraddini, and R.~Tedrake.
\newblock {R3T}: Rapidly-exploring random reachable set tree for optimal
  kinodynamic planning of nonlinear hybrid systems, 2019.
\newblock Available at
  \url{https://groups.csail.mit.edu/robotics-center/public_papers/Wu20.pdf}.

\bibitem[Kingma and Ba(2015)]{AdamPyTorch}
D.~P. Kingma and J.~L. Ba.
\newblock Adam: A method for stochastic optimization.
\newblock In \emph{{Int.\ Conf.\ on Learning Representations}}, 2015.

\bibitem[Sun and Freund(2004)]{Sun2004}
P.~Sun and R.~M. Freund.
\newblock Computation of minimum-volume covering ellipsoids.
\newblock \emph{{Operations Research}}, 52\penalty0 (5), 2004.

\bibitem[Stellato et~al.(2017)Stellato, Banjac, Goulart, Bemporad, and
  Boyd]{StellatoBanjacEtAl2017}
B.~Stellato, G.~Banjac, P.~Goulart, A.~Bemporad, and S.~Boyd.
\newblock {OSQP}: An operator splitting solver for quadratic programs.
\newblock 2017.
\newblock {Available at } \url{https://arxiv.org/abs/1711.08013}.

\end{thebibliography}

\appendix


\section{Proof of Theorem \ref{thm:conv_convHulls_randSets}}

For ease of reading, we first restate Theorem \ref{thm:conv_convHulls_randSets}:

\setcounter{theorem}{1}
\begin{theorem}[Convergence of Random Convex Hulls using (\textbf{randUP})\vspace{2mm}] 
	Let 
	$\smash{\{(\x_0^j,\ac^j,\btheta^j,\w^j)\}_{j=1}^m}$
	 be i.i.d. sampled parameters in $\X_0{\times}\U^{\scalebox{.6}{$k{-}1$}}{\times}\Theta{\times}\W^{\scalebox{.6}{$k{-}1$}}$.  
	Define $\smash{\x_k^j}$  
	according to 
	\eqref{eq:dyns}, 
	and denote the resulting convex hulls as $\smash{\X_k^m \,{=}\, \textrm{Co}(\{\x_k^j\}_{j=1}^m)}$.
	Assume that the sampling distribution of the parameters satisfies $\smash{\Prob(\x_k^j\,{\in}\, G_k)\,{>}\,0}$ for any open set $G_k$ s.t. $\X_k\,{\cap}\,G_k\,{\neq}\, \emptyset$. 
	
	Then, as $m\rightarrow\infty$, $\X_k^m$ converges to the convex hull of the reachable set $\textrm{Co}(\X_k)$ almost surely. 
\end{theorem}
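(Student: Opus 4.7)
The strategy is to invoke Theorem~\ref{thm:conv_randSets_detLim} with the deterministic limit $\X := \textrm{Co}(\X_k)$. This requires three ingredients: (i) each $\X_k^m$ is a random closed set in the sense of Definition~\ref{def:rand_set}; (ii) the ``no false hits'' condition (\textbf{C1}) holds; (iii) the ``every true hit is eventually realized'' condition (\textbf{C2}) holds. The bulk of the effort will be in establishing (\textbf{C2}), and in particular in handling open sets $G$ that meet the interior of $\textrm{Co}(\X_k)$ without meeting $\X_k$ itself.

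\textbf{Step 1: measurability.} The $k$-step map $F_k:\X_0{\times}\U^{k-1}{\times}\Theta{\times}\W^{k-1}\to\R^n$ obtained by composing $\f$ with itself is continuous, hence Borel measurable. Therefore each sampled endpoint $\x_k^j = F_k(\x_0^j,\ac^j,\btheta^j,\w^j)$ is an $\R^n$-valued random vector, and the finite union $\{\x_k^j\}_{j=1}^m$ is a random finite set. Taking convex hulls of finitely many random points yields a random compact (hence closed) set; concretely, $\{\omega : \X_k^m(\omega)\cap K\neq\emptyset\}$ is Borel for every compact $K$ because $\X_k^m$ has finitely many extreme points depending measurably on $\omega$. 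Compactness of $\X_k^m$ is inherited from compactness of $\X_k$, which in turn follows from continuity of $\f$ and compactness of $\X_0,\U_i,\Theta,\W$.

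\textbf{Step 2: condition (\textbf{C1}).} This is immediate. Since every sample satisfies $\x_k^j\in\X_k\subset\textrm{Co}(\X_k)$ and $\textrm{Co}(\X_k)$ is convex, we have $\X_k^m\subset\textrm{Co}(\X_k)$ deterministically. Thus if $\textrm{Co}(\X_k)\cap K=\emptyset$, then $\X_k^m\cap K=\emptyset$ for all $m$ with probability one.

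\textbf{Step 3: condition (\textbf{C2}).} Fix an open set $G$ with $\textrm{Co}(\X_k)\cap G\neq\emptyset$ and pick $y\in G\cap\textrm{Co}(\X_k)$. By Carath\'eodory's theorem, there exist points $y_1,\ldots,y_{n+1}\in\X_k$ and nonnegative weights $\lambda_1,\ldots,\lambda_{n+1}$ summing to one with $y=\sum_i \lambda_i y_i$. Choose $\epsilon>0$ such that $B_\epsilon(y)\subset G$, and let $G_i$ be an open ball of radius $\epsilon$ around $y_i$; by convexity of the Euclidean norm, whenever $z_i\in G_i$ for each $i$ we have $\sum_i \lambda_i z_i\in B_\epsilon(y)\subset G$. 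Each $G_i$ meets $\X_k$ (it contains $y_i$), so by hypothesis $p_i:=\Prob(\x_k^j\in G_i)>0$. The events $A_j^{(i)}=\{\x_k^j\in G_i\}$ are i.i.d.\ with $\sum_j\Prob(A_j^{(i)})=\infty$, hence the second Borel--Cantelli lemma implies that a.s.\ each $G_i$ is hit infinitely often, so a.s.\ there is a random index $m_0(\omega)$ beyond which every $G_i$ contains at least one sample. For $m\geq m_0(\omega)$, a convex combination of such samples lies in $G\cap \X_k^m$, so $\X_k^m\cap G\neq\emptyset$ for all sufficiently large $m$, giving (\textbf{C2}).

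\textbf{Main obstacle.} The purely measure-theoretic parts are routine; the delicate step is (\textbf{C2}) when $G$ only meets the interior of $\textrm{Co}(\X_k)$ but not $\X_k$ itself (e.g., $\X_k$ is the vertex set of a polytope and $G$ is a ball around its centroid). The positive-probability hypothesis is stated for open sets meeting $\X_k$ rather than $\textrm{Co}(\X_k)$, so one cannot conclude directly that samples land in $G$. The Carath\'eodory-plus-Borel--Cantelli argument above is what bridges this gap, and verifying that the open balls $G_i$ around the Carath\'eodory witnesses $y_i$ are indeed charged by the sampling distribution (they are, because they meet $\X_k$) is the only subtle bookkeeping required.
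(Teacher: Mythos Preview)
Your proposal is correct and follows the same skeleton as the paper's proof: show the $\X_k^m$ are random closed sets, then verify conditions (\textbf{C1}) and (\textbf{C2}) of Theorem~\ref{thm:conv_randSets_detLim} with deterministic limit $\textrm{Co}(\X_k)$. Steps~1 and~2 match the paper's treatment essentially verbatim (the paper also notes the monotonicity $\X_k^m\subseteq\X_k^{m+1}$, which it exploits to simplify the ``infinitely often'' statement in (\textbf{C2}), but this is a cosmetic difference).

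Where your argument genuinely diverges---and improves on the paper---is in Step~3. The paper argues (\textbf{C2}) by directly asserting $\Prob(\x_k^j\in G)>0$ whenever $G$ meets $\textrm{Co}(\X_k)$, and then applies the second Borel--Cantelli lemma to conclude that $G$ contains a \emph{sample} infinitely often. But the hypothesis only guarantees positive mass on open sets meeting $\X_k$, not $\textrm{Co}(\X_k)$; if $G$ lies in the interior of the convex hull and misses $\X_k$ itself (e.g., $\X_k=\{0,1\}\subset\R$, $G=(0.4,0.6)$), no sample ever lands in $G$ and the paper's step~(1) breaks. Your Carath\'eodory construction---writing $y\in G\cap\textrm{Co}(\X_k)$ as $\sum\lambda_i y_i$ with $y_i\in\X_k$, surrounding each $y_i$ by a ball $G_i$ that \emph{does} meet $\X_k$, and then observing that once each $G_i$ is hit the convex hull $\X_k^m$ intersects $G$---is exactly what is needed to close this gap. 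So your route is not merely different but strictly more rigorous on this point.
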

%
%

\begin{proof} 
Without loss of generality,  
we prove this theorem for any arbitrary fixed time index $k$. 
For conciseness, 
we drop the index $k$ and denote 
$\smash{(\x^j,\X^m,\X)} \,{=}\, \smash{(\x^j_k,\X^m_k,\textrm{Co}(\X_k))}$, 
the sampled parameters tuple $\z{=}(\x_0,\ac,\btheta,\w)\,{\in}\,\R^z$, and 
the compact parameters set $\Z{=}\X_0{\times}\U^{\scalebox{.6}{$k$}}{\times}\Theta{\times}\W^{\scalebox{.6}{$k{-}1$}}$. 
Also, let 
$\f(\z){=}\f(\boldsymbol\cdot,\ac_{k-1},\btheta,\w_{k-1}){\circ}{\dots}{\circ}\f(\x_0,\ac_0,\btheta,\w_{0})$, which is also continuous in $\z$. 

Let $(\Omega,\mathcal{A}, \mathbb{P})$ a probability space, and 
$\mathcal{F}$ the family of closed sets in $\R^n$. 
Define  $\X^m:\Omega\rightarrow\mathcal{F}$,  
with $\smash{\X^m(\omega) = \textrm{Co}(\{\x^j\}_{j=1}^m)}$. 
	Then, $\{\X^m(\omega), m\geq 1\}$ is a sequence of compact random sets satisfying $\X^1(\omega)\subseteq \X^2(\omega) \subseteq \dots $ almost surely (a.s.). 

Indeed, for all $j\,{=}\,1,\dots,m$, $\{\x^j\}$ is a random closed set \citep{Molchanov_BookTheoryOfRandomSets2017}. 
Then, applying \citep[Theorem 1.3.25, (i), (iv)]{Molchanov_BookTheoryOfRandomSets2017}, we obtain that $\X^m=\smash{\textrm{Co}(\{\x^j\}_{j=1}^m)}$ is a random closed set. 
Hence $\{\X^m(\omega), m\geq 1\}$ is a sequence of random closed sets satisfying Def. \ref{def:rand_set}. 
Further, each $\X^m(\omega)$ is compact since it is the convex hull of bounded $\smash{\x^j}$'s, since $\f(\cdot)$ is continuous and $\Z$ is compact. 
Finally, 
by definition, 
$\smash{\X^m} \subseteq\X^{m+1}$ a.s. for all $m$, 
hence $\smash{\X^1(\omega)\subseteq \X^2(\omega)} \subseteq \dots $ a.s.

Now that we proved that our convex hulls are random sets, we proceed with the proof that the sequence $\{\X^m(\omega), m\geq 1\}$ converges to $\X$ 
 by proving that we satisfy (\textbf{C1}) and (\textbf{C2}) of Theorem \ref{thm:conv_randSets_detLim}. 

(\textbf{C1}): Let $K\in \mathcal{K}$ satisfy $\X\cap K = \emptyset$. 
Since $\x^j(\omega)\in\X$ almost surely for all $j$, $\X^m\subset\textrm{Co}(\X)$ almost surely for all $m$. 
	This implies that $\X^m\cap K=\emptyset$ almost surely for all $m$. Thus, $\Prob(\X^m\cap K\neq\emptyset)=0$ and $\sum_{m=1}^\infty \Prob(\X^m\cap K\neq\emptyset)=0$. By the first Borel-Cantelli lemma, $\Prob(\X^m\cap K \neq \emptyset \ \ i.o.) = 0$. 

(\textbf{C2}): Let $G\in \mathcal{G}$ satisfy $\X\cap G \neq \emptyset$. 
To prove that $\Prob(\X^m\cap G  = \emptyset \ \ i.o.) = 0$, we proceed in three steps: (1) we show that sampling points $\x^j$ within $G$ occurs infinitely often (i.o.); 
(2) we use the growth property $\X^m\subseteq\X^{m+1}$ to rewrite (\textbf{C2});
(3) we relate the probability of sampling $\x^j$ within $G$ with the probability of $\X^m$ to intersect with $G$.

(1) 
The goal is to show $\Prob (\x^j\in G \ \ i.o.) = 1$. 

We first note that $\{\omega \,|\, \x^j\in G\} \,{=}\, \{\omega \,|\, \z^j \ \text{s.t.} \ \f(\z^j) \,{\in}\, G\}$. 
Since the parameters $\smash{\z^j}$ are sampled independently for each $j$, 
it can be shown that $\{\omega\,|\,\x^j\in G\}$ are also independent for $j=1,\mydots,m$. 

Then, let $\Z_G\,{\subset}\,\R^z$ such that for all $\z\in\Z_G$, we have $\f(\z)\in G$. 
By continuity of $\f(\cdot)$, $\Z_G$ is also an open set. 
Then, since $\X\cap G \neq \emptyset$ and by assumption on the sampling distribution over parameters, we have  $\Prob(\x^j\in G) = \Prob(\z^j\in\Z_G)=\alpha_G >0$, where $\alpha_G$ does not depend on $j$ since the parameters $(\x_0,\ac,\btheta,\w)$ are sampled i.i.d.. 
Therefore, we obtain $\smash{\sum_{j=1}^\infty\Prob(\x^j\in G)=\infty}$.

Next, since 
the events $\{\omega\,|\,\x^j\,{\in}\, G\}$ are independent, and by the above, 
we apply the second Borel-Cantelli lemma to obtain 
$\smash{\Prob (\x^j\,{\in}\, G \ \ i.o.) \,{=}\, \Prob\big(\cup_{n=1}^\infty\cap_{m=n}^\infty \x^j\,{\in}\, G\big) \,{=}\, 1}$.

From this result and $\cap_{n=1}^\infty A_n \subseteq A_1$, we obtain that 
\begin{equation}\label{eq:prob_union_xm_in_G_equals_1}
\Prob (\x^j\in G \ \ i.o.) \leq \Prob\bigg(\bigcup_{m=1}^\infty\x^m\in G\bigg) 
\implies 
\Prob\bigg(\bigcup_{m=1}^\infty\x^m\in G\bigg) = 1.
\end{equation}

(2) 
Second, we rewrite (\textbf{C2}) as follows:
\begin{align}\nonumber
\Prob(\X^m\cap G = \emptyset \ \ i.o.)
&=
\Prob\bigg(\bigcap_{n=1}^\infty\bigcup_{m=n}^\infty \X^m\cap G = \emptyset\bigg)
=
1 - \Prob\bigg(\bigcup_{n=1}^\infty\bigcap_{m=n}^\infty \X^m\cap G \neq \emptyset\bigg).
\end{align}
Since $\X^m\subseteq\X^{m+1}, \forall m$, we have that $\{\omega \, |\, \bigcap_{m=n}^\infty \X^m\cap G \neq \emptyset\}=\{\omega \, |\, \X^n\cap G \neq \emptyset\}$, and 
\begin{equation}\nonumber
\Prob\bigg(\bigcup_{n=1}^\infty\bigcap_{m=n}^\infty \X^m\cap G \neq \emptyset\bigg)=\Prob\bigg(\bigcup_{n=1}^\infty\X^n\cap G \neq \emptyset\bigg).
\end{equation}
\vspace{-4mm}
\begin{flalign}\label{eq:io_union_iff}
\text{Therefore,} 
&&\Prob(\X^m\cap G = \emptyset \ \ i.o.) = 0 
	\iff 
\Prob\bigg(\bigcup_{n=1}^\infty\X^n\cap G \neq \emptyset\bigg) = 1.
&&
\end{flalign}

\vspace{-4mm}
(3) 
Finally, we combine the two results from above. First, we note that since $\x^j\in\X^j$,
we have $\{\omega \,|\, \x^j(\omega) \in G\} \subseteq
\{\omega \,|\, \X^j(\omega) \cap G \neq \emptyset\}$. 
Hence, we obtain
\vspace{-2.3mm}
\begin{equation*}
\bigcup_{j=1}^\infty\{\omega \,|\, \x^j(\omega) \in G\} \subseteq
\bigcup_{j=1}^\infty\{\omega \,|\, \X^j(\omega) \cap G \neq \emptyset\}.
\end{equation*}
Combining with \eqref{eq:prob_union_xm_in_G_equals_1}, we obtain:
\vspace{-9.3mm}
\begin{equation}\label{eq:prob_union_Xm_intersects_G_equals_1}
\Prob\bigg(\bigcup_{n=1}^\infty\X^n\cap G \neq \emptyset\bigg) = 1.
\end{equation}
Using \eqref{eq:io_union_iff}, \eqref{eq:prob_union_Xm_intersects_G_equals_1} is equivalent to $\Prob(\X^m\cap G = \emptyset \ \ i.o.) = 0$, which conludes the proof of (\textbf{C2}). 

By Theorem \ref{thm:conv_randSets_detLim}, we conclude that the sequence $\{\X^m(\omega),m\geq 1\}$ almost surely converges to the deterministic set $\X$ as $m\rightarrow\infty$. 
As $\X$ is defined as the convex hull of the true reachable set $\textrm{Co}(\X_k)$, this concludes our proof of Theorem \ref{thm:conv_convHulls_randSets}.
%
\end{proof}

\newpage
\section{Further Details and Applications of Adversarial Sampling}

We present further results for adversarial sampling for $n_{\textrm{adv}}\geq 1$ in Figure \ref{fig:adv:freeflyer:adv_effect_iters_all}. 
First, we vizualize the effect of adversarial sampling on a planar version of the 13-dimensional nonlinear spacecraft system under uncertainty presented in Section \ref{sec:result}. This result shows that taking more adversarial steps leads to samples concentrated at the boundaries of the reachable sets. However, this does not necessarily correlate with larger sets, since the convex hull over the samples is taken.

\begin{figure}[!htb]
\centering
\includegraphics[width=1\linewidth]{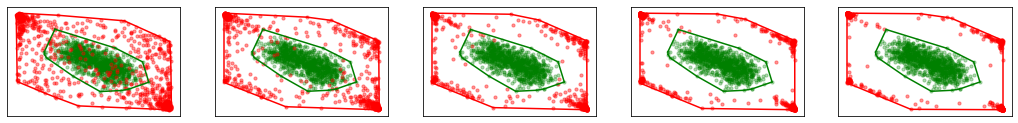}
\caption{Green: (\textbf{randUP}). Red: (\textbf{robUP!}). Effect of adversarial sampling for $n_{\textrm{adv}}\,{\in}\,\{1,\mydots,5\}$ for a planar spacecraft system subject to uncertainty. Projection onto positions given a sequence of open-loop controls. }
\label{fig:adv:freeflyer:adv_effect_iters_all}
\end{figure}

\begin{minipage}{\linewidth}
	\begin{wrapfigure}{R}{0.35\linewidth}
	\begin{minipage}{1\linewidth}
	\vspace{-5mm}
    \centering
	\includegraphics[width=0.9\linewidth]{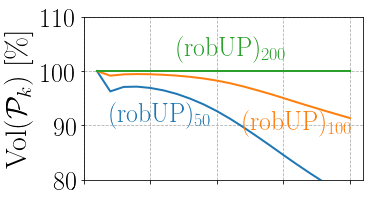}
	\includegraphics[width=0.9\linewidth]{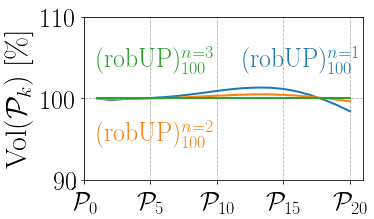}
	\caption{To perform robust trajectory optimization for a spacecraft under uncertainty, we run 500 randomized experiments to choose $M$ and $n_{\textrm{adv}}$. 
	For an horizon $N\,{=}\,20$, executing $(\textbf{robUP!})$ with $n_{\textrm{adv}}\,{=}\,1$ and $M{\in}\{50,100,200\}$ requires an average of $\{83,173, 304\}$ms, respectively, on a laptop with an i7-6700 CPU (2.60~GHz) and 8~GB of RAM. 
}\label{fig:spacecraft_tuning}
	\vspace{-3mm}
	\end{minipage}%
	\end{wrapfigure}
	
    \vspace{-2mm}
    Next, we justify the choice of $M=100$ and $n_{\textrm{adv}}=1$ when performing robust trajectory optimization in Section \ref{sec:result}. 
    Starting from $\x_{0}$, we perform 500 experiments where parameters $\btheta$, disturbances $\w_k$, and control trajectories within $\U$ are randomized. 
    For different $M$ and $n_{\textrm{adv}}$, we use $(\textbf{robUP!})$ and compare positional volume coverage, which is crucial to determine whether a given homotopy class of paths is feasible or not, due to obstacle avoidance constraints. 
    Results in Figure \ref{fig:spacecraft_tuning} show that increasing $(M,n_{\textrm{adv}})$ beyond $(100,1)$ does not lead to drastic improvements, compared to the volume of the experiment with largest $M$ and $n_{\textrm{adv}}$. It is thus a reasonable choice of hyperparameters for this application. 
    In Figure \ref{fig:adv:spacecraft:adv_effect_iters_all}, we provide a visualization of the effect of adversarial sampling for this system, where all samples are projected onto $x,y$.
    \vspace{-2mm}
\end{minipage}

\begin{figure}[!htb]
\begin{minipage}{0.6\linewidth}
\centering
\includegraphics[width=1\linewidth]{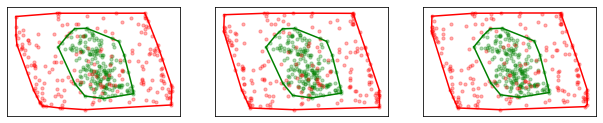}
\caption{Effect of adversarial sampling for $M\,{=}\,200$ and $n_{\textrm{adv}}\in\{1,2,5\}$ (in red) on the projection onto $x,y$ positions of the sampled reachable sets at time $k=12$ for the spacecraft system under uncertainty. In green, (\textbf{randUP}) is shown for reference.}
\label{fig:adv:spacecraft:adv_effect_iters_all}
\end{minipage}
\end{figure}

\begin{minipage}{\linewidth}
	\begin{wrapfigure}{R}{0.20\linewidth}
	\begin{minipage}{1\linewidth}
	\vspace{-5mm}
    \centering
	\includegraphics[width=0.8\linewidth]{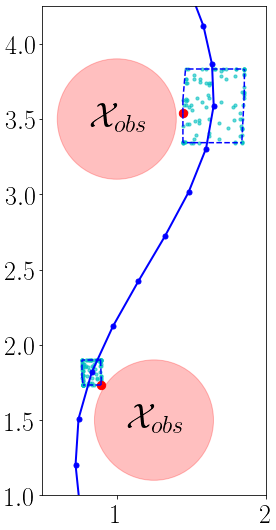}
	\caption{(\textbf{robUP!}) can be used for sensitivity analysis.}\label{fig:sensitivity}
	\vspace{-3mm}
	\end{minipage}%
	\end{wrapfigure}
	
    \vspace{-2mm}
    \paragraph{Adversarial sampling for sensitivity analysis} 
    Since (\textbf{robUP!}) actively searches for parameters and disturbances which lie outside the convex hull of the reachable states, it can be used to efficiently find parameters for which the system violates a property: a problem also known as falsification. 
    In particular, further insight can be gained from the solution of the robust path of the robust spacecraft planning problem from Section \ref{sec:result}. 
    In Figure \ref{fig:sensitivity}, the samples and convex hulls at times $k=8,14$ are shown. For each one, the sampled state $\smash{\x_k^j}$ closest to obstacles is shown in red. Their respective sampled parameters $\btheta^j$ and disturbances $\smash{\w_k^j}$ are equal to $\smash{(m^j,J_i^j)}\,{=}\,(7.1,0.075)$  for $k\,{=}\,8$, and $(7.3,0.065)$ for $k\,{=}\,14$, with saturated $\smash{\w_{k}^j}$ in opposite directions for the two particles. 
    This indicates that all variables influence the size of the reachable sets. 
    Moreover, a larger inertia does not necessarily correlate with smaller reachable sets, and both large and smaller values have an impact. 
    For future applications, this methodology could be used to analyze the sensitivity of more challenging dynamical systems with respect to different parameters, 
    guide sampling-based reachability analysis to respect a finite set of critical constraints, 
    and design robust online adaptation rules for learning-based controllers.
\end{minipage}

\section{Additional Experimental Details}\label{sec:appendix:lipschitz}
\subsection{Uncertainty Propagation using Lipschitz Continuity}\label{sec:lipschitz_derivations}

In this section, we detail our implementation of the Lipschitz-based uncertainty propagation method which we compare with in Section \ref{sec:result}. 
For these experiments, consider the dynamical system
\begin{equation}
\x_{k+1} = \f(\x_k) 
= \h(\x_k) + \g(\x_k), \quad \x_k\in\R^n.
\end{equation}
Note that we drop the dependence on the control input $\ac_k$ for conciseness, and since our comparisons concern a sequence of known open-loop controls. 
For simplicity, we assume $\h$ is an affine map, and $\g$ is Lipschitz continuous, such that for all $\x,\bmu\in\R^n$,
\begin{equation}
|g_i(\x)-g_i(\bmu)| \leq L_{g_i} \|\x-\bmu \|_2, \quad i=1,\mydots,n.
\end{equation}

The method presented in \citep{koller2018} consists of propagating ellipsoidal sets:
\begin{mydef}[Ellipsoidal Set]\label{def:ellipsoidal_conf_region}
	A set $\mathcal{B}(\bmu, \bQ)$, $\bmu\in\R^n,\bQ\in\mathbb{R}^{n\times n}, \bQ\succ 0$, is an ellipsoidal set if 
	\begin{equation}
	    \mathcal{B}(\bmu , \bQ) := \left\{ 
		\x \mid (\x-\bmu)^T \bQ^{-1} (\x-\bmu) \leq 1
		\right\}.
	\end{equation}
\end{mydef}

Assume that $\x_k\in\B(\bmu_k,\bQ_k)$. 
The problem consists of computing $\bmu_{k+1}, \bQ_{k+1}$ such that $\x_{k+1}\in\B(\bmu_{k+1},\bQ_{k+1})$. Generally, the reachable set of $\f(\x_k)$ given that $\x_k$ lies in an ellipsoidal set will not be an ellipsoidal set. However, an outer-approximation is sufficient for control applications where constraints satisfaction needs to be guaranteed. First, we compute the center of the ellipsoid as 
\begin{equation}
    \bmu_{k+1} = \f(\bmu_k).
\end{equation}
Since $\h$ is affine, its Jacobian does not depend on $\x$. Thus, we decompose the error to the mean as 
\begin{align*}
\x_{k+1}-\bmu_{k+1} &= \h(\x_k)-\h(\bmu_k)+\g(\x_k)-\g(\bmu_k)
\\
&= 
\nabla \h \,{\cdot}\, (\x_k-\bmu_k)+\g(\x_k)-\g(\bmu_k).
\end{align*}

First, given $\x_k\in\B(\bmu_k,\bQ_k)$, we have $\nabla \h \,{\cdot}\, (\x_k-\bmu_k)\in\B(\mathbf{0},\bQ_{\textrm{nom},k})$, with $\bQ_{\textrm{nom},k}=\h\bQ_k\h^T$. 

Second, we use the Lipschitz property of $\g$ to bound the approximation error component-wise as
\begin{align}
|g_i(\x_k)-g_i(\bmu_k)|\leq L_{g_i} \|\x_k-\bmu_k\|_2\leq L_{g_i} \lambda_{\max}(\bQ_k),
\end{align}
where $\lambda_{\max}(\bQ_k)$ denotes the largest eigenvalue of $\bQ_k$, and since $\x_k\in\B(\bmu_k,\bQ_k)$. This defines a rectangular set in which $\g(\x_k)-\g(\bmu_k)$ is guaranteed to lie, which can be outer-approximated by an ellipsoid as
\begin{equation}
\g(\x_k)-\g(\bmu_k)\in\B(\mathbf{0},\bQ_{\g_k}), \quad \text{where} \ \ \bQ_{\g_k}=n\cdot\textrm{diag}\big((L_{g_i}\lambda_{\max}(\bQ_k)^2), \ i=1,\mydots,n \big),
\label{eq:Qg_diag}
\end{equation}
where $\textrm{diag}(\mydots)$ denotes the diagonal matrix with diagonal components $(\mydots)$. 

Finally, the two terms can be combined as 
\begin{equation}
\x_{k+1}-\bmu_{k+1} \in
\B(\mathbf{0},\bQ_{\textrm{nom},k})
\oplus
\B(\mathbf{0},\bQ_{\g_k})
\subset
\B(\mathbf{0},\bQ_{k+1}),
\end{equation}
where $\bQ_{k+1}=\frac{c+1}{c}\bQ_{\textrm{nom},k} + (1+c)\bQ_{\g_k}$, with $c=\sqrt{\textrm{Tr}(\bQ_{\textrm{nom},k}/\textrm{Tr}(\bQ_{\g_k})}$, and $\textrm{Tr}(\cdot)$ denotes the trace operator. Finally, combining the terms above and by linearity,
\begin{equation}
\x_{k+1} \in \B(\bmu_{k+1},\bQ_{k+1}).
\end{equation}

Starting from $\x_0\in\B(\bmu_0,\bQ_0)$, and applying this recursion for all $k=0,\mydots, N{-}1$, this method enables the computation of sequence of sets which outer approximate the true reachable sets of the nonlinear system, given known upper-bounds for the Lipschitz constant of the dynamics.

\newpage

\subsection{Neural network experiment and comparisons}

\paragraph{Neural network training} 
To simplify training, aid generalization, and simplify comparisons with the Lipschitz-based method (see \ref{sec:appendix:lipschitz}), we decompose the dynamics as $\x_{k+1}=\f(\x_k,\ac_k)=\h(\x_k)+\g(\x_k,\ac_k)$, where $\h(\x_k)=\x_k$ is a known nominal term which captures prior knowledge about the system, and $\g$ is unknown and needs to be learned by the neural network. 
We opt for a feed-forward network architecture with $2$ hidden layers of width $128$ each, with $\tanh(\cdot)$ activation functions. 
To train the neural network, we re-sample states and controls at each training step as described in the next section, giving rise to the tuples $\{(\x_k,\ac_k,\x_{k+1})^b\}_{b=1}^B$. We use a single-step quadratic loss $\smash{\sum_{b=1}^B \|\x_{k+1}^b-\g(\x_k^b,\ac_k^b)\|_2^2}$, a batch size $B=20$, and include standard $L_2$-regularization with weight $10^{-6}$. All code is written using PyTorch \citep{paszke2017automatic}, and the model is trained using \textit{Adam} \citep{AdamPyTorch}, with an initial learning rate of $0.02$, and a decay factor of $(1-10^{-6})$ every gradient descent step. 
After 10k training steps, the model achieves a loss of around $10^{-7}$ on the validation dataset. We perform further validation through multi-steps rollouts of the system (over 20 timesteps). By adopting a model architecture which leads to minimal error, we are able to compare the reachable sets of the true linear system with those of the neural network, and evaluate volume coverage. 

\paragraph{Randomization} To provide comparisons of volume coverage, we perform $B\,{=}\,100$ experiments where we randomize initial states $\x_0^b\,{\in}\,\X_0\,{=}\,\B(\bmu_0,\bQ_0)$ and open-loop control trajectories $\ac^b\,{=}\,(\ac_0^b,\mydots,\ac_{N-1}^b)$, $b\,{=}\,1,\mydots,B$. 
We sample $\smash{\bmu_0^b\,{=}\,(\bpos_0^b,\bv_0^b)}$ with $\bpos_{0,i}^b\sim\textrm{Unif}(-5,5)$,  $\bv_{0,i}^b\sim\textrm{Unif}(-1,1)$, and constant $\smash{\bQ_0^b=10^{-3}\textrm{diag}([10,10,2,2])}$. 
We sample controls as $\smash{\ac_k^b=\bar{\ac}^b+\delta\ac_k^b}$ 
with $\bar{\ac}^b_i\sim\textrm{Unif}(-0.4,0.4)$ and $\delta\ac_{k,i}^b\sim\textrm{Unif}(-0.02,0.02)$ for neural network training, and with 
$\bar{\ac}^b_i\sim\textrm{Unif}(-0.1,0.1)$ and $\delta\ac_{k,i}^b\sim\textrm{Unif}(-0.005,0.005)$ for evaluation in Figure \ref{fig:results:nn_lip}.

\paragraph{Computation time}
In average, for this neural network, (
(\textbf{randUP})${}^{\scalebox{.7}{$1\textrm{k}$}}$, 
(\textbf{$\cdot$})${}^{\scalebox{.7}{$3\textrm{k}$}}$,
(\textbf{$\cdot$})${}^{\scalebox{.7}{$10\textrm{k}$}}$,
(\textbf{robUP!})${}^{\scalebox{.7}{$1\textrm{k}$}}_{\scalebox{.7}{$1$}}$
) require 
(9,28,120,648) ms on a laptop with an i7-6700 CPU (2.60GHz) and 8GB of RAM. 
We did not optimize our implementation. Performing operations in parallel and using a GPU would further accelerate both methods.

\paragraph{Lipschitz method}
In our experiments, we use $\h(\x)=\x$, and train a neural network to represent $\g(\x,\ac)$. To reduce conservatism in our comparisons, we use the Lipschitz constant of the true linear system  $\x_k = (\pos_k,\vel_k) \in \R^4$, $\ac_k \in \R^2$, $\pos_{k+1} = \pos_k + \vel_k$, and $\vel_{k+1} = \vel_k + \ac_k$, which are given as $L_{g_i}=1$ for $i=1,2$, and $L_{g_i}=0$ for $i=3,4$.

\paragraph{Volume computation}
Finally, the volume of the ellipsoidal sets $\B(\bmu,\bQ)$ are computed in closed form as
\vspace{-2mm}
\begin{equation}
\textrm{Vol}(\B(\bmu,\bQ)) = 
\frac{\pi^{n/2}}{\Gamma(n/2+2)}\frac{1}{\sqrt{\textrm{det}(\bQ^{-1})}},
\end{equation}
where $\Gamma(\cdot)$ is the standard gamma function of calculus  \citep{Sun2004}.

\section{Robust Trajectory Optimization with Sampling-based Convex Hulls}
\paragraph{Problem formulation} 
This section proposes a method to perform (approximately) robust trajectory optimization using sampling-based reachability analysis, and sequential convex programming (SCP). Specifically, we extend the method presented in \citep{LewBonalli2020} to leverage sampling-based convex hulls. 

In the following, we use the same assumptions and notations outlined in Section \ref{sec:problem}. 
The goal consists of computing an open-loop trajectory $(\x_{0:N},\ac_{0:N{-}1})$ which satisfies all constraints $(\x_k\in\Xsafe,\ac_k\in\U)$, for any bounded uncertain parameter $\btheta\in\Theta$ and disturbances $\w_k\in\W$. The initial state $\x_0$ is uncertain and lies within a known initial bounded set $\X_0$, and the final state $\x_N$ should lie within the  final goal region $\Xgoal$. 
The trajectory should minimize the step and final cost functions  $l:\X\times\U{\rightarrow}\R, \  l_f:\X{\rightarrow}\R$ (e.g., fuel consumption, and final velocity). 
To make this problem tractable, given a sequence of control inputs $\ac\,{=}\,(\ac_0,\mydots,\ac_{N{-}1})$, we define the nominal trajectory $\bmu\,{=}\,(\bmu_0,\mydots,\bmu_N)$, from a fixed $\bmu_0\,{\in}\,\X_0$, 
as 
\begin{equation}
\bmu_{k+1}=\f(\bmu_k,\ac_k,\bar{\btheta},\bar{\w}_k), \quad
\bar{\btheta}\in\Theta, \ \  \bar{\w}_k\in\W, 
\label{eq:nominal_traj}
\end{equation}
where $\bar{\btheta}$ and $\bar{\w}_k$ are fixed nominal parameters and disturbances. 
Given this nominal trajectory, we aim to minimize the cost of the nominal trajectory $\smash{\sum_{k=0}^{N-1} \cost(\bmu_k, \ac_k)+l_f(\bmu_N)}$, subject to all constraints defined above. We define the following robust optimal control problem: 
\\

\centerline{\textbf{Robust Optimal Control Problem}
\vspace{-6mm}}
\begin{subequations}
\label{eq:full_problem}
\begin{align}
\hspace{-10mm}\mathop{\text{min}}_{\ac_{0:N-1}}
\qquad &
\sum_{k=0}^{N-1} \cost(\bmu_k, \ac_k) + l_f(\bmu_N)
\label{eq:cost:measure}
\\
\hspace{-10mm}\text{subject to}\qquad & \x_{k+1} = \f(\x_k,\ac_k,\btheta,\w_k),  
\quad  \w_k\in\W, \ \ \btheta\in\Theta,
\quad  k\,{=}\,0, \mydots,N\,{-}\,1,
\\
& 
\bigwedge_{k=1}^N\big(\x_{k} \in \Xsafe\big) \ \cap \ 
\bigwedge_{k=0}^{N-1}\big(\ac_{k} \in \U\big) \  \cap \ 
\big(\x_{N} \in \Xgoal\big) \  \cap \ \big(\x_{0} \in \X_0\big)
.
\label{eq:robust_constraints_orig}
\end{align}
\end{subequations} 

%
Using the reachable sets $\smash{\{\X_k\}_{k=0}^N}$ defined in \eqref{eq:reach_set_onestep}, and the nominal trajectory in \eqref{eq:nominal_traj}, it is possible to show that the previous problem is equivalent to the following problem:
\\

\centerline{\textbf{Reachability-Aware Optimal Control Problem}
\vspace{-6mm}
}
\begin{align}
\mathop{\text{min}}_{\ac_{0:N-1}}
\ \  
\sum_{k=0}^{N-1} \cost(\bmu_k, \ac_k){+}l_f(\bmu_N) 
\quad
\text{s.t.}
\   \
\bigwedge_{k=1}^N \X_{k} \,{\subset}\, \Xsafe, 
\ \ \
\bigwedge_{k=0}^{N-1}\ac_{k} \,{\in}\, \U,
\ \  \
\X_{N} \,{\subset}\, \Xgoal,
\label{eq:reachabilityaware_ocp}
\end{align}
where $\smash{\{\X_k\}_{k=0}^N}$ depend on the chosen sequence of controls $\ac_{0:N-1}$, and are computed from $\X_{0}$. In this work, the reachable sets are approximated using either (\textbf{randUP}) or (\textbf{robUP!}), which yield convex hulls. However, we show next that computing the convex hull of the samples $\smash{\x_k^j}$ is not always necessary to reformulate common constraints found in robotic applications. 

\paragraph{Sequential convex programming (SCP) and constraints reformulation}
In this work, we leverage SCP to solve \eqref{eq:reachabilityaware_ocp}. The SCP technique consists of iteratively formulating convex approximations of \eqref{eq:reachabilityaware_ocp}, and solving each sub-problem using convex optimization. Specifically, at each iteration $(i{+}1)$, the previous nominal solution $(\bmu^i,\ac^i)$ is used to linearize all constraints, and reformulate \eqref{eq:reachabilityaware_ocp} as a quadratic program with linear constraints which can be efficiently solved using \textbf{OSQP} \citep{StellatoBanjacEtAl2017}. The solution of this problem, denoted as $(\bmu^{i{+}1},\ac^{i{+}1})$, should approach the solution of \eqref{eq:reachabilityaware_ocp} at convergence, which can be assessed by evaluating $\|\bmu^{i{+}1}-\bmu^i\|+\|\ac^{i{+}1}-\ac^i\|$. In this work, we use the open-source SCP procedure presented in \citep{LewBonalli2020}\footnote{The implementation of \citep{LewBonalli2020} is available at \url{github.com/StanfordASL/ccscp}.}, which includes trust region constraints and additional parameter adaptation rules to encourage convergence to a locally optimal solution to the original non-convex problem. 

Next, we present simple methods to use sampling-based reachable sets to reformulate constraints. First, consider dimension-wise constraints of the form
\begin{equation}
\x_{\min,i}\leq\x_{k,i}(\ac)\leq\x_{\max,i}, \ \ \ \forall (\x_0,\btheta,\w_{0:k{-}1})\,{\in}\, \X_0\times\Theta\times\W^k,
\label{eq:constraint_max_dimensionwise}
\end{equation}
where $i\in\{1,\mydots,n\}$. For instance, such constraints can represent velocity bounds, and are incorporated within $\x_k\in\Xsafe$ in  \eqref{eq:robust_constraints_orig}. 
To reformulate such constraints using the reachable set $\X_k(\ac)$ and the nominal trajectory $\bmu$, it suffices to compute its outer-bounding rectangle $\Delta_k(\ac)$, defined as $\Delta_k=\{\x_k \,| \, |\x_{k,i}-\bmu_{k,i}|\leq \delta_{k,i}\}$. 
Using (\textbf{randUP}) or (\textbf{robUP!}), the bounds  $\smash{\delta_{k,i}=\max_{j}|\x_{k,i}^j-\bmu_{k,i}|}$ can be efficiently computed. 
With these rectangular sets, \eqref{eq:constraint_max_dimensionwise} is conservatively reformulated as 
\begin{equation}
\x_{\min,i}+\delta_{k,i}(\ac)\leq\bmu_{k,i}(\ac)\leq\x_{\max,i}-\delta_{k,i}(\ac).
\end{equation}

Next, linear constraints $\ba\,{\cdot}\,\x_{k,1:s}\,{\leq}\, b$ on the $s$ first components of $\x_k$, with $\ba\in\R^s,b\in\R$ can be reformulated using outer-bounding ellipsoidal sets for the $s$ first components of $\x$: 
$\mathcal{B}^s(\bmu_k , \bQ_k) = \left\{ 
\x \mid (\x-\bmu_{k,1:s})^T \bQ_k^{-1} (\x-\bmu_{k,1:s}) \leq 1
\right\}$. 
Following derivations in Section \ref{sec:lipschitz_derivations}, $\bQ_k$ can be computed as $\smash{\bQ_k\,{=}\,s\,{\cdot}\,\textrm{diag}(\delta_i^2, \, i\,{=}\,1,\mydots,s)}$, yielding an ellipsoidal set $\B^s$ which outer-approximates the reachable set $\X_k$ (projected onto its  $s$ first dimensions). Using this ellipsoidal set, the linear constraint $\ba\,{\cdot}\,\x_{k,1:s}\,{\leq}\, b$ is conservatively reformulated as $\smash{\ba^T\bmu_{k,1:s}+(\ba^T\bQ_k\ba)^{1/2}\,{\leq}\, b}$, following similar derivations as in \citep{LewBonalli2020,koller2018}. Finally, this constraint is linearized, to yield a convex reformulation of the original constraints  \eqref{eq:robust_constraints_orig}, and obtain a convex quadratic program with linear constraints which can be solved using \textbf{OSQP}. 
For the spacecraft planning problem in Section \ref{sec:result}, such linear constraints are obtained from non-convex obstacle avoidance constraints, by expressing them using the signed distance function, and linearizing the expression. This procedure is described in \citep{LewBonalli2020}, where it is shown that it is conservative for general convex obstacles.

\end{document}